\renewcommand\thesection{\arabic{section}}
\renewcommand\thesubsection{\arabic{section}.\arabic{subsection}}
\renewcommand\p@subsection{}
\renewcommand\p@subsubsection{}
\newcommand{\sub}[1]{_{\!\mathsmaller{#1}}}
\newcommand{\subw}[1]{_{\!\mathsmaller{\, #1}}}
\newcommand{\eq}[1]{Eq.~\eqref{#1}}
\newcommand{\fig}[1]{Fig.~\ref{#1}}
\newcommand{\app}[1]{Appendix~(\ref{#1})}
\newcommand{\lemref}[1]{Lemma~\ref{#1}}
\newcommand{\thmref}[1]{Theorem~\ref{#1}}
\newcommand{\defref}[1]{Definition~\ref{#1}}
\newcommand{\featref}[1]{Feature~\ref{#1}}
\newcommand{\reqref}[1]{Requirement~\ref{#1}}
\newcommand{\<}{\langle}
\renewcommand{\>}{\rangle}
\newcommand{\ket}[1]{|{#1}\rangle}
\newcommand{\pr}[1]{P[{#1}]}
\newcommand{\prs}[1]{P\sub{\mathcal{S}}[{#1}]}
\newcommand{\prdd}[1]{P\sub{\mathcal{D}}[{#1}]}
\newcommand{\prw}[1]{P\sub{\mathcal{W}}[{#1}]}
\newcommand{\h}{{\mathcal{H}}}
\newcommand{\rr}{{\mathcal{R}}}
\newcommand{\s}{{\mathcal{S}}}
\newcommand{\ee}{{\mathcal{E}}}
\newcommand{\dd}{{\mathcal{D}}}
\newcommand{\ii}{{\mathcal{I}}}
\newcommand{\ww}{{\mathcal{W}}}
\newcommand{\xx}{{\mathcal{X}}}
\newcommand{\co}{\mathds{C}}
\renewcommand{\nat}{\mathds{N}}
\newcommand{\one}{\mathds{1}}
\newcommand{\zero}{\mathds{O}}
\newcommand{\tr}{\mathrm{tr}}
\theoremstyle{plain}
\newtheorem{defn}{Definition}
\newtheorem{lem}{Lemma}
\newtheorem{thm}{Theorem}
\newtheorem{cor}{Corollary}
\newtheorem{feat}{Feature}
\newtheorem{req}{Requirement}
\numberwithin{equation}{section}
\begin{document}

\title{A quantum Szilard engine without heat from a thermal reservoir}
\author{M. Hamed Mohammady}
\affiliation{Department of Physics and Astronomy, University of Exeter, Stocker Road, Exeter, EX4 4QL, United Kingdom}
\author{Janet Anders}
\affiliation{Department of Physics and Astronomy, University of Exeter, Stocker Road, Exeter, EX4 4QL, United Kingdom}

%\date{\today}

\begin{abstract}
We study a quantum Szilard engine that is not powered by heat drawn from a thermal reservoir, but rather by projective measurements. The engine is constituted of a system $\s$, a weight $\ww$, and a Maxwell demon $\dd$, and  extracts work via  measurement-assisted feedback control.  By imposing  natural constraints on the measurement and feedback processes, such as energy conservation and leaving the memory of the demon intact, we show that while the engine can function without heat from a thermal reservoir, it must give up at least one of the following features that are satisfied by a standard Szilard engine: (i) repeatability of measurements; (ii) invariant weight entropy; or (iii) positive work extraction for all measurement outcomes. This result is shown to be a consequence of the Wigner-Araki-Yanase (WAY) theorem, which imposes restrictions on the observables that can be measured under additive conservation laws. This observation is a  first-step towards developing ``second-law-like'' relations for measurement-assisted feedback control beyond thermality.  
\end{abstract}

\maketitle

\section{Introduction}

The possibility of extracting work from a system that is in thermal equilibrium, by means of measurement-assisted feedback control \cite{Sagawa-Feedback-Control-Classical,Sagawa-Measurement-Feedback-Autonomous}, was first introduced by Maxwell \cite{Maxwellian-demon,Maxwells-demon-colloquium}. Seemingly violating the second law of thermodynamics, this observation sparked an intense debate, with a key contribution coming from Leo Szilard \cite{Szilard}. Szilard envisioned an engine where the system, $\s$, is a single particle in a box of volume $V$. Maxwell's demon, $\dd$, extracts work from the system by performing two operations, namely, measurement and feedback. During the measurement stage, the demon places a frictionless partition inside the box, thus dividing it into two volumes $V\sub{L}$ and $V\sub{R}$. Thereafter, the demon measures on which side the particle is located. During the feedback stage, conditional on the particle being found on the right (left) side of the partition, the demon attaches a weight-and-pulley mechanism to the right (left) of the partition so that, as the particle collides with the partition, the weight is elevated. The increase in the weight's gravitational potential energy is identified as the extracted work. This is shown schematically in \fig{Szilard engine}. 

\begin{figure}[!htb]
\includegraphics[width= 8 cm]{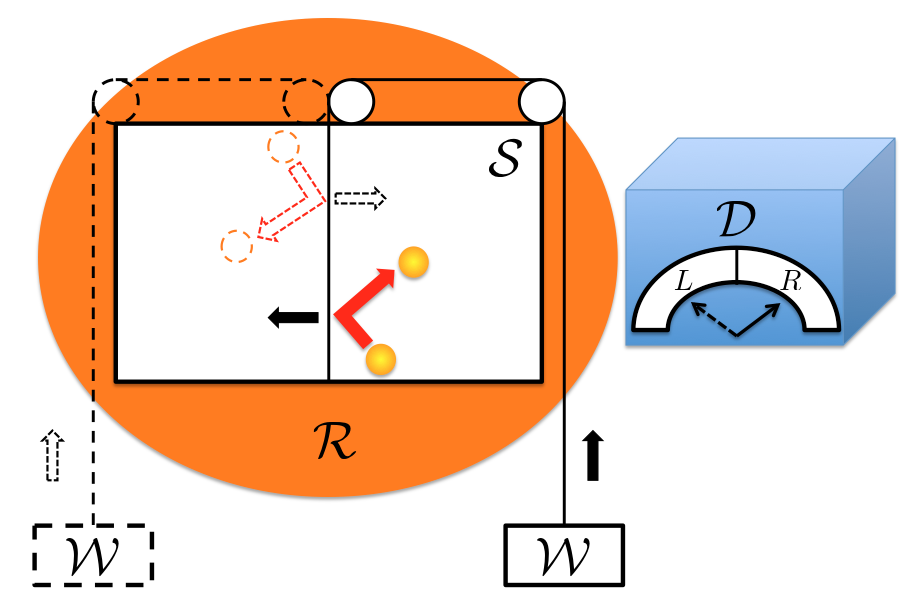}
\caption{ Szilard's engine. The demon, $\dd$, places a partition inside a box containing a single particle. This is the system $\s$. During the measurement stage the demon  measures the system and determines that the particle is on the right (left) hand side. This is stored in the demon's memory as the state $R$ ($L$). During the feedback stage, the demon  attaches a weight $\ww$ to the partition via a pulley mechanism placed on the right (left) hand side. As the particle collides with the partition, moving it to the left (right), the weight is elevated and thus work is extracted. Each time the particle collides with the walls, it exchanges energy with the thermal reservoir, $\rr$. As such, the source of work is the heat drawn from the reservoir.  }\label{Szilard engine}
\end{figure}

By considering an infinite ensemble of such boxes, the average state of the particle can be interpreted as being an ideal gas occupying volume $V_x$ for $x\in \{L,R\}$ which, after feedback, ``expands'' to volume $V$. If the box is in thermal contact with a single reservoir $\rr$ of temperature $T$, and the gas  expands quasistatically,   the engine will extract $W_x = K_B T\int_{V_x}^V dV'/V' = K_B T \ln(V/V_x)$ units of work, where $K_B$ is Boltzmann's constant. This is of course an average quantity of work, taken over the infinite ensemble of boxes. Moreover, the source of the extracted work is the heat drawn from the thermal reservoir.  As the (average) state of the system at the start and end of the process is the same -- an ideal gas occupying volume $V$ -- the Szilard engine is in apparent violation of the Kelvin statement of the second law; it is a cyclically operating device, the sole effect of which is to absorb energy in the form of heat from a single thermal reservoir and to produce an equal amount of work \cite{Balian-macro-1}.

As shown by Penrose and Bennett \cite{Penrose-stat-mech, Bennett-Landauer-review, Bennett-Landauer-Notes}, one may salvage the second law by observing that the  demon is itself a physical entity, whose memory is altered by the measuring process. In order to make the engine cyclical the demon's memory must be returned  to its initial configuration, i.e.,  the demon's memory must be ``reset'' or ``erased''.  If the erasure process is conducted by means of an interaction with the same thermal reservoir, it will require an average work cost no less than the average extracted work, which is dissipated as heat to the reservoir \cite{Landauer, Landauer-information-physical, Reeb-Wolf-Landauer}; we may never win in the long run. 

In recent years, much attention has been paid to the interplay between quantum theory and thermodynamics \cite{Anders-thermo-review,Goold-thermo-review,Millen-thermo-review, Horodecki2013, Anders-Measurement-Thermodynamics, Lostaglio2015b,  Karen-extractable-work-correlations, Gogolin2015a, Guryanova2015, YungerHalpern2015a, Alhambra2016a}. This has included the extension of work extraction through feedback control to the quantum regime,  culminating in both theoretical \cite{Zurek-Szilard, Dahlsten-Szilard,Sagawa-Szilard,Sagawa-feedback-control-Quantum, Jacobs-Feedback-freeenergy, Sagawa-Heat-engine-QI} and experimental \cite{Camati-Maxwell-Demon, Janet-Maxwell-Experiment} investigations.   Of particular interest to our discussion is the work presented  in \cite{Alexia-thermo-Measurement, Alexia-Maxwell-Measurement}, wherein the authors consider the possibility of a Maxwell demon engine that functions in thermal isolation. Here, the source of work can no longer be identified as heat from a thermal reservoir, but rather as the energetic changes due to projective measurements. Such quantum measurements, however,  ultimately  result from a physical interaction between the system to be measured, and the measuring apparatus; in the case of a Szilard engine, the measuring apparatus is the demon's memory. It stands to reason, therefore, that energetic considerations come to bear on the measuring process  \cite{Sagawa-thermodynamic-measurement, Kurt-measurement-thermo, Popescu-energy-conversation-measurement, Miyadera-Time-Energy-Measurement, Abdelkhalek-measurement}, which will pose limitations on the performance of Szilard engines that, in lieu of a thermal reservoir, draw power from projective measurements.

We recall from the classical Szilard engine that hidden entropy sinks, when the demon's memory is not explicitly  accounted for, allow for a violation of the second law. Similarly,   hidden work sources involved in the measuring process can also allow us to ``cheat''. Consequently, a constraint of primary importance that must be imposed on the measuring process of a Szilard engine is energy conservation; if the energy of the system is increased by projective measurements, the demon's energy must decrease in kind.   A central result from quantum measurement theory that is relevant to us is the Wigner-Araki-Yanase theorem \cite{Wigner-Measurement-conservation,Araki-Yanase, Miyadera-WAY-distinguishability, Loveridge-WAY,Loveridge-WAY-2,Mehdi-WAY} which, under additive conservation laws, will limit the observables that can be measured.   Using this, we shall show that while a Szilard engine can  be powered by projective measurements instead of heat from a reservoir, it will have to give up at least one of three features that are present in the classical Szilard engine. The three features of the classical Szilard engine in question are:

\

\begin{feat}\label{feature repeatable measurement}
The measurement is repeatable. If the demon measures the box and finds that the particle was on the right (left) hand side, a subsequent measurement would reveal that the particle is on the right (left) hand side with certainty. This allows for the interpretation that, after the measurement has been completed, the system ``possesses'' the revealed value.
\end{feat}
\begin{feat}\label{feature same entropy}
The weight's entropy does not change as a result of work extraction.  Work is extracted by raising the weight, thus increasing its gravitational potential energy. In general, the height of the weight's center of mass will be a fluctuating quantity, with an uncertainty $\Delta h$. However, $\Delta h$ does not change as a result of work extraction. In other words, the weight is neither ``cooled'' nor ``heated'' as it is elevated.
\end{feat}
\begin{feat}\label{feature positive work extraction}
The engine works reliably -- the work extracted is strictly positive for all measurement outcomes.  Whether the particle is on the right or left hand side of the box, the extracted work has the value $ W_x = K_B T \ln(V/V_x)$ where $x\in \{L,R\}$. As $V$ and $V_x<V$ are always positive, finite numbers, then  $W_x >0$ for all $x \in \{L,R\}$.
\end{feat}

\section{Modeling a quantum Szilard engine}
A general quantum Szilard engine is constituted of four subsystems: a system $\s$; a demon $\dd$; a weight $\ww$; and a thermal reservoir $\rr$. These have the Hilbert space $\h = \h\sub{\ww}\otimes \h\sub{\s}\otimes \h\sub{\dd}\otimes \h\sub{\rr}$, and respectively the Hamiltonians $H\sub{\ww}$, $H\sub{\s}$, $H\sub{\dd}$, and $H\sub{\rr}$. When describing operators that act non-trivially on only one subsystem, we shall omit identities on the other subsystems for simplicity. Furthermore, we shall only consider finite-dimensional Hilbert spaces.  This model has in common with \cite{thermo-individual-quantum,Aberg-Catalytic-Coherence} and \cite{Sagawa-Heat-engine-QI,Abdelkhalek-measurement} that it  includes respectively the weight and the demon's memory within the quantum description.  As with the classical Szilard engine, each cycle of our quantum Szilard engine involves two stages, namely, measurement and feedback.   Before $\dd$ can perform measurements in the next cycle, its memory must first be erased. This is achieved by an appropriate interaction with $\rr$. As the state of $\s$ can be different at the end of the cycle, then unlike the classical Szilard engine, the quantum Szilard engine is, strictly speaking, not cyclical. However, as will be shown, such non-cyclicality will not result in a violation of the second law.

All Szilard engines must satisfy the following two requirements. Here, we shall state them colloquially, but will offer mathematically precise formulations in the next two subsections.

\

\begin{req}\label{requirement energy}
Both the measuring and feedback processes must be energy conserving on the total system.
\end{req}
This is necessary for all work sources to be explicitly accounted for; if either the measuring or feedback process does not conserve the energy of the total system, then it will require work from an outside source. 

\

\begin{req}\label{requirement feedback}
If the demon's memory is in a state corresponding to a measurement outcome $x$, the feedback process must result in a closed evolution of  the compound of system plus weight (and reservoir, if it is present). After feedback, the demon's memory must remain in the same state. 
\end{req}
This is necessary in order to conform with the functioning of the classical Szilard engine described above. There, upon discovering the particle's location, the demon  arranges the weight-and-pulley mechanism accordingly so as to facilitate work extraction. After making its arrangements, the weight, system, and reservoir evolve as a closed, mechanically isolated system, while the demon's memory is unaltered. 

In the subsequent sections, we shall depart from the traditional set-up of the Szilard engine by altering the feedback stage; this will no longer involve $\rr$, and the source of work will not be identified as heat from the reservoir, but rather the internal energy of the compound $\s+\dd$. Each cycle of work extraction is depicted schematically in \fig{work extraction model}. Our work is similar in spirit to that of \cite{Alexia-Maxwell-Measurement}, except that we model both the weight and demon's memory as explicit quantum systems, and impose energy conservation on the measuring process.

\begin{figure}[!tb]
\includegraphics[width= 8 cm]{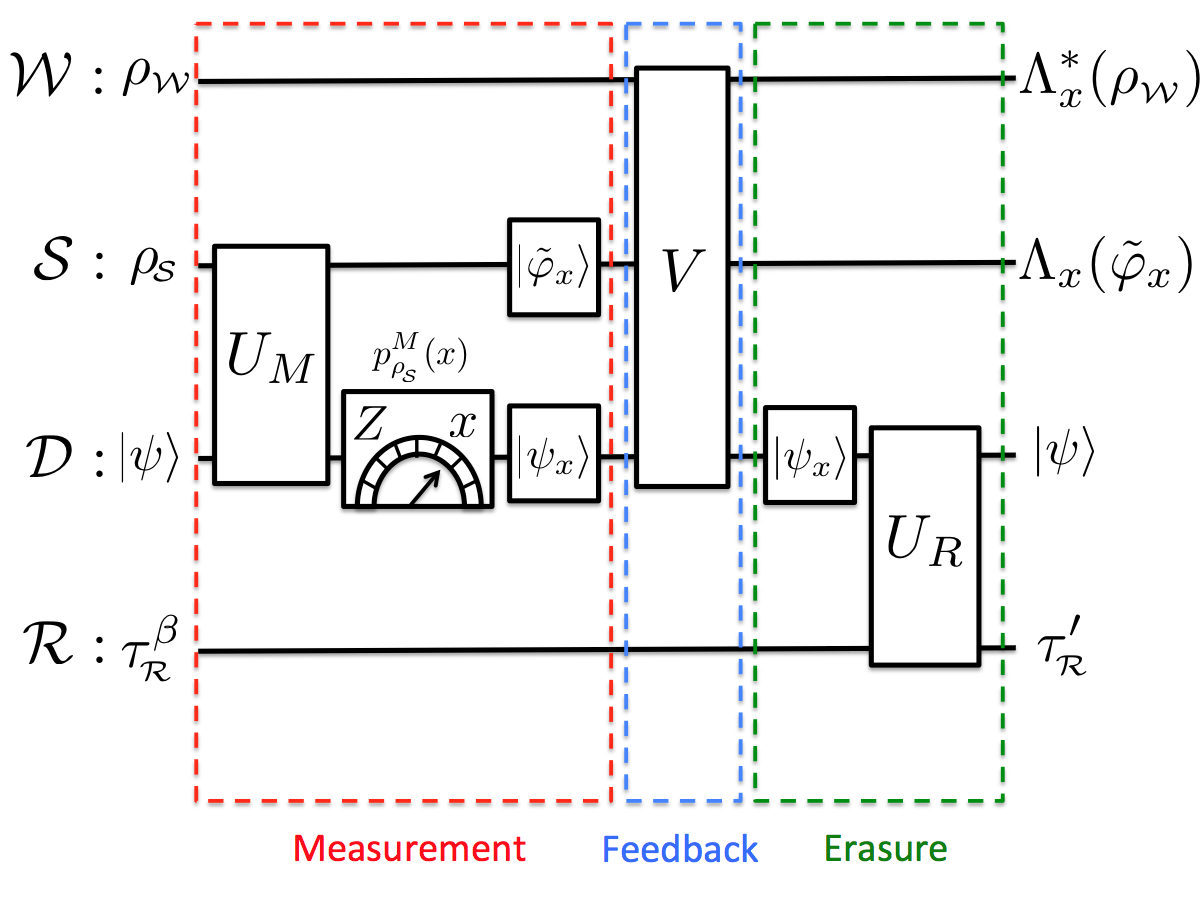}
\caption{The circuit model of measurement-assisted work extraction, without heat from the thermal reservoir. $\ww$, $\s$, $\dd$ and $\rr$ are initially prepared in states $\rho\subw{\ww}$, $\rho\sub{\s}$, $\ket{\psi}$, and $\tau\sub{\rr}^\beta$ respectively. (I) Measurement: First,  $\s$ and $\dd$ are coupled by the joint premeasurement unitary $U_M$. This correlates the two systems so that the measurement outcomes of the observable $M\sub{\s}$ on $\s$, namely $x\in\xx$, are ``stored'' in $\dd$'s memory as the eigenstates of an observable $Z\sub{\dd}$. These are the states $\{\ket{\psi_x}\}_{x\in \xx}$. The system's post-measurement state $\ket{\tilde \varphi_x}$ will be classically correlated with the demon memory state $\ket{\psi_x}$, occurring with a probability $p^M\sub{\rho_{\!_\s}}(x)$. (II) Feedback: The global feedback unitary operator $V$ then couples $\ww$ and $\s$ such that, conditional on the outcome $x$, they evolve by the CPTP maps $\Lambda_x^*$ and $\Lambda_x$, respectively. (III) Erasure: At the end of feedback, the demon's memory is erased by coupling to the thermal reservoir $\rr$ with the unitary interaction $U_R$.  }
\label{work extraction model}
\end{figure}

\subsection{Measurement stage}
During the measurement stage, the demon $\dd$ performs a measurement on $\s$, and by doing so prepares it in a state that is correlated with the measurement outcome. For now, we will restrict ourselves to standard, non-degenerate projective measurements, and shall generalise to degenerate observables in \app{degenerate observable section}.   If $\h\sub{\s} \simeq \co^d$, the observable can be represented as the self-adjoint operator 
\begin{align}\label{system observable}
M\sub{\s} = \sum_{x\in \xx} x\prs{\varphi_x},
\end{align} 
where $\xx := \{1,\dots,d\}$ are the measurement outcomes. Here $\prs{\varphi_x}\equiv |\varphi_x\>\<\varphi_x|$ is a  projection on the vector $\ket{\varphi_x} \in \h_\s$. We wish to model the measurement of $M\sub{\s}$ as resulting from a physical interaction between $\s$ and $\dd$, so that the outcomes $\xx$ are stored in the memory of $\dd$ by the orthogonal set of states $\{\ket{\psi_x}\in \h\sub{\dd}\}_{x\in\xx}$.  Therefore, we describe the measurement model  of $M\sub{\s}$, as defined in \eq{system observable}, by the tuple  $\mathscr{M}:= ( \h\sub{\dd}, \ket{\psi}, U_M, Z\sub{\dd})$ \cite{von-Neumann,Busch-operational,Busch-measurement,Busch-measurement-2,Heinosaari}. Here $\ket{\psi} \in \h\sub{\dd}$ is the initial state of $\dd$; $U_M$ is the \emph{premeasurement} unitary interaction between $\s$ and $\dd$, characterised by 
\begin{equation}\label{standard premeasurement unitary}
U_M: \ket{\varphi_x} \otimes\ket{\psi} \mapsto \ket{\tilde \varphi_x} \otimes \ket{\psi_x}
\end{equation}
where $\{\ket{\tilde \varphi_x}\}_{x\in\xx}$ can be any set of vectors on $\h\sub{\s}$, which do not have to be orthogonal; and 
\begin{align}\label{demon observable}
Z\sub{\dd} =  \sum_{x \in \xx} x P\sub{\dd}^x
\end{align}  
is an observable on $\dd$ with each outcome $x$ corresponding to the same for $M\sub{\s}$. Here, $P\sub{\dd}^x$ is a projection operator of arbitrary rank, such that for all $x\in \xx$, $\ket{\psi_x} \in P\sub{\dd}^x(\h\sub{\dd})$. If $\h\sub{\dd}\simeq\h\sub{\s}$, then $P\sub{\dd}^x= \prdd{\psi_x}$.  

For an arbitrary initial state $\rho\sub{\s}$ of $\s$, the total state of $\s+\dd$ after premeasurement is 
\begin{align}\label{premeasurement state}
\rho\sub{\s+\dd}^M:= U_M(\rho\sub{\s} \otimes \prdd{\psi})U_M^\dagger.
\end{align}
In order for the measuring process  to leave a classical record of outcomes, the demon's memory  must be  \emph{objectified} \cite{Mittelstaedt-measurement}. That is to say, after coupling $\s$ with $\dd$ by the premeasurement unitary as defined by \eq{standard premeasurement unitary}, thus preparing the  entangled state $\rho\sub{\s+\dd}^M$ as defined in \eq{premeasurement state}, we must prepare the statistical mixture
\begin{align}\label{Gemenge}
\rho\sub{\s+\dd}^{M,O} &:= \sum_{x \in \xx}P\sub{\dd}^x\rho\sub{\s+\dd}^M P\sub{\dd}^x,
\nonumber \\ &=  \sum_{x\in \xx} p_{\rho_{\!_\s}}^M(x) \prs{\tilde \varphi_x}\otimes \prdd{\psi_x},
\end{align}
where 
\begin{align}\label{Born rule}
p^M_{\rho_{\!_\s}}(x) := \tr[\prs{\varphi_x} \rho\sub{\s}]
\end{align} 
is the Born rule probability of observing outcome $x$, given a measurement of $M\sub{\s}$ on $\s$, prepared in state $\rho\sub{\s}$. \eq{Gemenge} is a proper mixture, or a \emph{Gemenge} $(p_{\rho_{\!_\s}}^M(x) ,  \prs{\tilde \varphi_x}\otimes \prdd{\psi_x})$, which can be interpreted as each state $\prs{\tilde \varphi_x}\otimes \prdd{\psi_x}$ being prepared according to a probability distribution $p_{\rho_{\!_\s}}^M(x)$, as given by \eq{Born rule}. Moreover, $\{\ket{\tilde \varphi_x}\}_{x\in\xx}$ can be interpreted as  the set of post-measurement states on $\s$. We may objectify $\dd$ by performing an unselective L\"uders measurement of $Z\sub{\dd}$ \cite{Busch-measurement-2}, as defined in \eq{demon observable}, on $\dd$. Alternatively, as shown in \cite{Abdelkhalek-measurement}, $\dd$ can be objectified by unitarily coupling it with an auxiliary system.  In the subsequent section we show that imposing \reqref{requirement feedback} on the feedback process implies that it does not matter whether we objectify the demon before or after the feedback stage.

\

\begin{defn}\label{definition energy conserving}
Consider a system with Hilbert space $\h$ and Hamiltonian $H$. The completely positive, trace preserving (CPTP) map $\ee$ is said to conserve energy if 
\begin{align}
\tr[H \rho] = \tr[H \ee(\rho)]
\end{align}
for all states $\rho$ on $\h$. 
\end{defn}

\

\begin{lem}\label{lemma measurement energy conserving}
The measuring process satisfies \reqref{requirement energy}, i.e., is energy conserving, if both $[Z\sub{\dd}, H\sub{\dd}]_-=\zero$ and $[U_M, H\sub{\s}+ H\sub{\dd}]_-=\zero$, where $H\sub{\s}$ and $H\sub{\dd}$ are the system and demon Hamiltonians, respectively, and $Z\sub{\dd}$ is the demon observable defined in \eq{demon observable}.
\end{lem}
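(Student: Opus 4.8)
The plan is to regard the measuring process as the composition of two CPTP maps on $\s+\dd$ — the premeasurement $\rho \mapsto U_M\,\rho\, U_M^\dagger$, followed by the objectification $\rho \mapsto \sum_{x\in\xx} P\sub{\dd}^x\,\rho\, P\sub{\dd}^x$ of \eq{Gemenge} — and to verify that each conserves energy on its own. Since energy conservation in the sense of \defref{definition energy conserving} is phrased through the expectation value of the total Hamiltonian, and an expectation value that is invariant under two successive maps is invariant under their composition, it suffices to check the two steps separately. Because neither $\ww$ nor $\rr$ is touched at this stage, the only relevant Hamiltonian is $H\sub{\s}+H\sub{\dd}$, so I would establish invariance of $\tr[(H\sub{\s}+H\sub{\dd})\,\cdot\,]$ under each map.

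First I would dispatch the premeasurement. By cyclicity of the trace, $\tr[(H\sub{\s}+H\sub{\dd})\,U_M\rho\, U_M^\dagger] = \tr[U_M^\dagger (H\sub{\s}+H\sub{\dd}) U_M\,\rho]$, so the hypothesis $[U_M,\,H\sub{\s}+H\sub{\dd}]_-=\zero$ gives $U_M^\dagger (H\sub{\s}+H\sub{\dd}) U_M = H\sub{\s}+H\sub{\dd}$ and hence equality with $\tr[(H\sub{\s}+H\sub{\dd})\rho]$ for every $\rho$. This is the routine half.

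For the objectification I would write, again by cyclicity, $\tr[(H\sub{\s}+H\sub{\dd})\sum_x P\sub{\dd}^x\rho\, P\sub{\dd}^x] = \tr[\big(\sum_x P\sub{\dd}^x (H\sub{\s}+H\sub{\dd}) P\sub{\dd}^x\big)\rho]$ and show that the bracketed operator is exactly $H\sub{\s}+H\sub{\dd}$. Since each $P\sub{\dd}^x$ acts trivially on $\h\sub{\s}$ it commutes with $H\sub{\s}$, and the resolution of identity $\sum_x P\sub{\dd}^x=\one$ (which holds because the unselective L\"uders map is trace preserving) collapses the system contribution to $H\sub{\s}$. The demon contribution hinges on the key step: as $P\sub{\dd}^x$ is the spectral projection of $Z\sub{\dd}$ in \eq{demon observable} onto the distinct eigenvalue $x$, it is a polynomial in $Z\sub{\dd}$, so the hypothesis $[Z\sub{\dd},\,H\sub{\dd}]_-=\zero$ forces $[P\sub{\dd}^x,\,H\sub{\dd}]_-=\zero$; consequently $\sum_x P\sub{\dd}^x H\sub{\dd} P\sub{\dd}^x = H\sub{\dd}\sum_x P\sub{\dd}^x = H\sub{\dd}$.

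I expect the only genuine subtlety to be this last implication — passing from the commutation of $Z\sub{\dd}$ with $H\sub{\dd}$ to the commutation of $H\sub{\dd}$ with each spectral projection $P\sub{\dd}^x$ — which is standard in finite dimensions but is precisely where the structure of $Z\sub{\dd}$ as defined in \eq{demon observable} enters and where the rank-freedom of the $P\sub{\dd}^x$ must be handled with care. Everything else reduces to cyclicity of the trace and $\sum_x P\sub{\dd}^x=\one$. Since the two maps preserve the expectation of $H\sub{\s}+H\sub{\dd}$ independently, their composition — the full measuring process — does too, which establishes \reqref{requirement energy}.
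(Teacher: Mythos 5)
Your proposal is correct and takes essentially the same approach as the paper's proof: both decompose the measuring process into premeasurement followed by objectification, handle the former via $[U_M, H\sub{\s}+H\sub{\dd}]_-=\zero$, and handle the latter via $[Z\sub{\dd},H\sub{\dd}]_-=\zero$ through the equivalent condition $[P\sub{\dd}^x,H\sub{\dd}]_-=\zero$ for all $x\in\xx$. You merely spell out the trace-cyclicity computations and the spectral-projection argument that the paper asserts without detail.
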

\begin{proof}
The measuring process consists of premeasurement and objectification. Given \defref{definition energy conserving}, these are energy conserving if
\begin{align}
\tr[(H\sub{\s} + H\sub{\dd})\rho\sub{\s+\dd}^{M,O}] = \tr[(H\sub{\s} + H\sub{\dd})\rho\sub{\s}\otimes \prdd{\psi}]
\end{align}
for all $\rho\sub{\s}$ on $\h\sub{\s}$, where $\rho\sub{\s+\dd}^{M,O}$ is given by \eq{Gemenge}.  Therefore, we must have $[U_M, H\sub{\s}+ H\sub{\dd}]_-=\zero$ and $[P\sub{\dd}^x, H\sub{\dd}]_-=\zero$ for all $x\in \xx$. The latter condition is equivalent to $[Z\sub{\dd}, H\sub{\dd}]_-=\zero$.  
\end{proof}

Now we may analyse \featref{feature repeatable measurement} with respect to \reqref{requirement energy}. 

\

\begin{lem}\label{Repeatability lemma}
Let the measuring process satisfy \reqref{requirement energy}. It follows that the measurement of $M\sub{\s}$, as defined by \eq{system observable}, will satisfy \featref{feature repeatable measurement}, i.e, it will be repeatable, if and only if the post-measurement states $\{\ket{\tilde \varphi_x}\}_{x\in \xx}$ are eigenvectors of $H\sub{\s}$.
\end{lem}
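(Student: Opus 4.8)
The plan is to recast repeatability (\featref{feature repeatable measurement}) as an algebraic condition on the post-measurement vectors and then confront it with the energy-conservation conditions of \lemref{lemma measurement energy conserving}. By the Born rule \eq{Born rule}, a measurement of $M\sub{\s}$ performed on the post-measurement state $\ket{\tilde\varphi_x}$ returns outcome $y$ with probability $|\langle \varphi_y | \tilde\varphi_x \rangle|^2$. Repeatability demands that this be $\delta_{xy}$, so \featref{feature repeatable measurement} holds if and only if $|\langle \varphi_x | \tilde\varphi_x \rangle|^2 = 1$ for every $x \in \xx$, i.e. if and only if $\ket{\tilde\varphi_x} = e^{\imag\theta_x}\ket{\varphi_x}$ for some phases $\theta_x$. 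The lemma thus reduces to showing that, under \reqref{requirement energy}, one has $\ket{\tilde\varphi_x} \propto \ket{\varphi_x}$ if and only if the $\ket{\tilde\varphi_x}$ are eigenvectors of $H\sub{\s}$.

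The core step is to show that \reqref{requirement energy} already forces the measured eigenvectors $\ket{\varphi_x}$ to be eigenvectors of $H\sub{\s}$, i.e. $[M\sub{\s}, H\sub{\s}]_- = \zero$; this is the Wigner-Araki-Yanase ingredient. I would establish it by feeding coherent inputs into the measuring process. Taking a pure input $\ket{\chi} = \sum_x a_x \ket{\varphi_x}$ and evaluating the objectified output \eq{Gemenge}, the pointer orthogonality $\langle \psi_x | \psi_y \rangle = \delta_{xy}$ together with $[Z\sub{\dd}, H\sub{\dd}]_- = \zero$ (so that $H\sub{\dd}$ is block diagonal across the sectors $P\sub{\dd}^x$) annihilates every cross term on the output side, leaving $\tr[(H\sub{\s}+H\sub{\dd})\rho\sub{\s+\dd}^{M,O}] = \sum_x |a_x|^2 \big( \langle \tilde\varphi_x | H\sub{\s} | \tilde\varphi_x \rangle + \langle \psi_x | H\sub{\dd} | \psi_x \rangle \big)$. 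The input energy, however, retains the off-diagonal contributions $\sum_{x \neq y} a_x \bar a_y \langle \varphi_y | H\sub{\s} | \varphi_x \rangle$. Since \lemref{lemma measurement energy conserving} requires the two to coincide for all $\{a_x\}$, each off-diagonal element $\langle \varphi_y | H\sub{\s} | \varphi_x \rangle$ with $x \neq y$ must vanish, so every $\ket{\varphi_x}$ is an eigenvector of $H\sub{\s}$, say $H\sub{\s}\ket{\varphi_x} = \lambda_x \ket{\varphi_x}$.

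The forward implication is then immediate: repeatability gives $\ket{\tilde\varphi_x} \propto \ket{\varphi_x}$, and since the $\ket{\varphi_x}$ are eigenvectors of $H\sub{\s}$ so are the $\ket{\tilde\varphi_x}$. For the converse I would act with $H\sub{\s}+H\sub{\dd}$ on both sides of $U_M \ket{\varphi_x}\ket{\psi} = \ket{\tilde\varphi_x}\ket{\psi_x}$, use $[U_M, H\sub{\s}+H\sub{\dd}]_- = \zero$ and $H\sub{\s}\ket{\varphi_x} = \lambda_x \ket{\varphi_x}$, and contract with $\bra{\psi_x}$ over $\h\sub{\dd}$ --- legitimate because $[Z\sub{\dd}, H\sub{\dd}]_- = \zero$ keeps $H\sub{\dd}\ket{\psi_x}$ inside the sector $P\sub{\dd}^x$. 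Combined with the individual energy balance $\langle \tilde\varphi_x | H\sub{\s} | \tilde\varphi_x \rangle + \langle \psi_x | H\sub{\dd} | \psi_x \rangle = \lambda_x + \langle \psi | H\sub{\dd} | \psi \rangle$, the assumption that $\ket{\tilde\varphi_x}$ is an $H\sub{\s}$-eigenvector should pin its eigenvalue to $\lambda_x$ and hence place it in the $\lambda_x$-eigenspace, whereupon non-degeneracy of $M\sub{\s}$ collapses this to $\ket{\tilde\varphi_x} \propto \ket{\varphi_x}$. I expect this converse to be the main obstacle. The forward direction needs only that $\ket{\varphi_x}$ is an eigenvector, whereas the converse must upgrade ``$\ket{\tilde\varphi_x}$ is \emph{some} eigenvector of $H\sub{\s}$'' to ``$\ket{\tilde\varphi_x}$ is the \emph{same} eigenvector as $\ket{\varphi_x}$,'' thereby excluding permuted-eigenvector premeasurements in which $\ket{\tilde\varphi_x}$ lands in a different $H\sub{\s}$-eigenspace; closing this gap forces one to control the residual term $\bra{\psi_x} U_M \big( \ket{\varphi_x} \otimes H\sub{\dd}\ket{\psi} \big)$ left by the contraction, and with it the interplay between the pointer energies, the generally non-stationary ready state $\ket{\psi}$, and the conservation law.
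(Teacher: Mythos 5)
Your proof of the implication the paper actually establishes --- repeatability forces the post-measurement states to be eigenvectors of $H\sub{\s}$ --- is correct, and it differs from the paper's in a useful way. The paper reduces repeatability to $\ket{\tilde\varphi_x}=e^{i\theta}\ket{\varphi_x}$ exactly as you do, but then obtains $[M\sub{\s},H\sub{\s}]_-=\zero$ by citing the WAY theorem of \cite{Loveridge-WAY}. You instead prove the needed special case directly: since the objectified output energy $\sum_x |a_x|^2\left(\<\tilde\varphi_x|H\sub{\s}|\tilde\varphi_x\>+\<\psi_x|H\sub{\dd}|\psi_x\>\right)$ depends only on the moduli $|a_x|^2$, while the input energy retains the phase-sensitive terms $\bar a_y a_x\<\varphi_y|H\sub{\s}|\varphi_x\>$, conservation for all inputs forces every off-diagonal matrix element of $H\sub{\s}$ in the $\{\ket{\varphi_x}\}$ basis to vanish. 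This phase-variation argument is sound (it is where the Yanase condition $[Z\sub{\dd},H\sub{\dd}]_-=\zero$ enters, through the objectification step), and it makes the lemma self-contained where the paper's proof is not.

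On the converse, your diagnosis is exactly right, and the gap you flag is not a defect of your proof relative to the paper's: the paper never proves the ``if'' direction at all --- its proof establishes only ``repeatable $\Rightarrow\ket{\tilde\varphi_x}\propto\ket{\varphi_x}$'' and ``repeatable $\Rightarrow[M\sub{\s},H\sub{\s}]_-=\zero$''. Moreover, the obstruction you identify (the residual demon terms in the energy balance) is fatal, not merely technical. Concretely: let $\s$ be a qubit with $H\sub{\s}$-eigenvectors $\ket{\varphi_\pm}$ of energies $\omega$ and $0$, and let the demon's ready state $\ket{\psi}$ and pointer states $\ket{\psi_\pm}$ be $H\sub{\dd}$-eigenstates of energies $E$, $E+\omega$, $E-\omega$. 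The permuted premeasurement $\ket{\varphi_\pm}\otimes\ket{\psi}\mapsto\ket{\varphi_\mp}\otimes\ket{\psi_\pm}$ maps total-energy eigenvectors to total-energy eigenvectors of the same eigenvalue, so it extends to a unitary $U_M$ with $[U_M,H\sub{\s}+H\sub{\dd}]_-=\zero$, and $[Z\sub{\dd},H\sub{\dd}]_-=\zero$ is easily arranged; the post-measurement states are then eigenvectors of $H\sub{\s}$, yet the measurement is maximally non-repeatable. So the biconditional is false as literally stated, and no completion of your converse argument exists. Nothing downstream is damaged: all three cases of \thmref{impossibility theorem} invoke only the ``only if'' direction, which both you and the paper prove; the honest fix is to weaken the lemma to that implication (or to read ``the eigenvectors of $H\sub{\s}$'' as meaning $\ket{\tilde\varphi_x}\propto\ket{\varphi_x}$ specifically, which is what the paper's own proof actually delivers).
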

\begin{proof}
The post-measurement state of $\s$, conditional on outcome $x$, is $\ket{\tilde \varphi_x}$. The probability of observing outcome $x$ in a subsequent measurement of $M\sub{\s}$ will be $p^M_{\tilde \varphi_x}(x) = |\<\tilde \varphi_x|\varphi_x\>|^2$, as determined by \eq{Born rule}. This equals unity if and only if $\ket{\tilde \varphi_x} = e^{i\theta}\ket{\varphi_x}$. Therefore, $\{\ket{\tilde \varphi_x}\}_{x\in \xx}$ must be eigenvectors of $M\sub{\s}$. 

To show that $\{\ket{\tilde \varphi_x}\}_{x\in \xx}$ must be eigenvectors of $H\sub{\s}$ if the measurement is repeatable, we use the WAY theorem. The WAY theorem can be stated thusly: let the premeasurement unitary operator in the measurement model of $M\sub{\s}$, i.e., $U_M$, commute with $H\sub{\s} + H\sub{\dd}$. If the measurement of $M\sub{\s}$ is repeatable, or $[Z\sub{\dd},H\sub{\dd}]_-=\zero$, where $Z\sub{\dd}$ is defined in \eq{demon observable}, then $[M\sub{\s}, H\sub{\s}]_-=\zero$. We refer to \cite{Loveridge-WAY} for a proof. If $M\sub{\s}$ commutes with $H\sub{\s}$, then they will share the same eigenvectors. 
\end{proof}

\subsection{Feedback stage}

During the feedback stage, the demon brings the system in contact with the weight, $\ww$, which is initially prepared in state $\rho\subw{\ww}$. Conforming with \reqref{requirement feedback}, the demon then evolves the compound system of $\ww+\s$  by the unitary operator  $U_x$, which is chosen conditional on the measurement outcome $x\in \xx$. We wish to determine the global feedback unitary operator $V$ that achieves this.

\

\begin{lem}\label{feedback lemma memory}
Feedback is implemented by a unitary operator $V$ acting on the compound system $\ww+\s+\dd$. $V$ will satisfy \reqref{requirement feedback} if and only if it can be written as 
\begin{align}\label{feedback unitary}
V = \sum_{x\in \xx} U_x \otimes P\sub{\dd}^x,
\end{align}
such that $U_x$ are unitary operators on $\h\sub{\ww}\otimes \h\sub{\s}$, and $P\sub{\dd}^x$ are the projection operators defined in \eq{demon observable}. 
\end{lem}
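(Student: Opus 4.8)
The plan is to prove the two implications separately, taking the construction \eqref{feedback unitary} as the easy ``if'' direction and the necessity as the harder ``only if'' direction. Throughout I will exploit that the projections $P\sub{\dd}^x$ defined in \eq{demon observable} are the spectral projections of the observable $Z\sub{\dd}$, hence mutually orthogonal and resolving the identity, $\sum_{x\in\xx}P\sub{\dd}^x=\one$. Consequently the eigenspaces $\mathcal{P}_x:=P\sub{\dd}^x(\h\sub{\dd})$ partition $\h\sub{\dd}$, and the product vectors $\ket{\phi}\otimes\ket{\chi}$ with $\ket{\phi}\in\h\sub{\ww}\otimes\h\sub{\s}$ and $\ket{\chi}\in\mathcal{P}_x$ (ranging over all $x$) span the total Hilbert space. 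This spanning property is what will let me pin down $V$ from its action on product inputs alone.

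For the ``if'' direction I would first verify that $V=\sum_x U_x\otimes P\sub{\dd}^x$ is unitary: using $P\sub{\dd}^xP\sub{\dd}^y=\delta_{xy}P\sub{\dd}^x$ and $U_xU_x^\dagger=\one$ one finds $VV^\dagger=\sum_x\one\otimes P\sub{\dd}^x=\one$, and likewise $V^\dagger V=\one$. Then I would check \reqref{requirement feedback} directly: for a memory state $\ket{\chi}\in\mathcal{P}_x$ and any $\ket{\phi}$, orthogonality of the projections collapses the sum to $V(\ket{\phi}\otimes\ket{\chi})=(U_x\ket{\phi})\otimes\ket{\chi}$, so $\ww+\s$ undergoes the closed unitary evolution $U_x$ while the memory is returned unchanged. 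The extension to mixed memory states supported on $\mathcal{P}_x$ is immediate, since $V$ restricts to $U_x\otimes\one$ on that sector.

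For the ``only if'' direction I would translate the requirement into an action of $V$ on product inputs. Fixing an outcome $x$ and a memory state $\ket{\chi}\in\mathcal{P}_x$, a closed evolution of $\ww+\s$ keeps the reduced state of $\ww+\s$ pure for a pure input $\ket{\phi}\otimes\ket{\chi}$; since $V(\ket{\phi}\otimes\ket{\chi})$ is globally pure and a bipartite pure state with a pure marginal is a product, the output factorises as $(U_x\ket{\phi})\otimes\ket{\chi'}$, where $U_x$ is the conditional feedback unitary on $\h\sub{\ww}\otimes\h\sub{\s}$ posited in the setup. The demand that the memory ``remain in the same state'' then forces $\ket{\chi'}=e^{i\alpha}\ket{\chi}$, giving $V(\ket{\phi}\otimes\ket{\chi})=e^{i\alpha}(U_x\ket{\phi})\otimes\ket{\chi}$; absorbing the phase, I would compare this with \eqref{feedback unitary} on the spanning family of product vectors to conclude $V=\sum_x U_x\otimes P\sub{\dd}^x$ by linearity.

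The main obstacle is the final step of necessity: a priori the conditional unitary (and phase) extracted above could depend on the particular memory vector $\ket{\chi}$ within $\mathcal{P}_x$, whereas \eqref{feedback unitary} requires a single $U_x$ per outcome. I expect to settle this by feeding $V$ a superposition $\ket{\chi_+}\propto\ket{\chi_1}+\ket{\chi_2}$ of two orthonormal memory states in $\mathcal{P}_x$: applying the phase-absorbed factorisation to $\ket{\chi_+}$ and matching the $\ket{\chi_1}$ and $\ket{\chi_2}$ components against the factorisations obtained from $\ket{\chi_1}$ and $\ket{\chi_2}$ separately, linearity of $V$ forces the two conditional unitaries to coincide on every $\ket{\phi}$; iterating over an orthonormal basis of $\mathcal{P}_x$ yields one $U_x$, and collecting the sectors reproduces \eqref{feedback unitary}. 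I would stress that the \emph{literal} reading of ``remain in the same state'' is essential here, since it is precisely what excludes an extra memory rotation $W_x$ within $\mathcal{P}_x$ that a weaker ``preserve the observed value $x$'' condition would otherwise permit.
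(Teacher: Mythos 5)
Your proof is correct, and both directions ultimately rest on the same spanning idea as the paper's: since $\sum_{x}P\sub{\dd}^x=\one$, fixing the action of $V$ on product vectors $\ket{\phi}\otimes\ket{\chi}$ with $\ket{\chi}\in P\sub{\dd}^x(\h\sub{\dd})$ fixes $V$ everywhere. Where you genuinely diverge is in how the necessity direction extracts the block structure. The paper formalizes \reqref{requirement feedback} from the outset as $V(\ket{\Psi}\otimes\ket{\psi_x})=(U_x\ket{\Psi})\otimes\ket{\psi_x}$ with a \emph{single} $U_x$ per outcome, and then concludes via the sandwich identity $V=\sum_x P\sub{\dd}^x V P\sub{\dd}^x$ that $V$ is block diagonal, the tensor-product form of each block following from the same equation applied across each eigenspace. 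You instead start from a weaker, more literal reading of the requirement --- closed (unitary) evolution of $\ww+\s$ plus invariant memory, with the conditional unitary and phase a priori allowed to depend on the particular memory vector $\ket{\chi}$ --- derive the factorized form of the output from purity of the $\ww+\s$ marginal of a pure global state, and then prove via the superposition argument ($\ket{\chi_+}\propto\ket{\chi_1}+\ket{\chi_2}$ plus linearity) that the conditional unitary is constant on each eigenspace. This buys something real: when $P\sub{\dd}^x$ has rank greater than one, which the paper explicitly permits, the paper's starting equation already presupposes the $\chi$-independence of $U_x$, whereas you prove it; your closing remark that the literal ``memory remains in the same state'' condition is what excludes an additional intra-eigenspace rotation is likewise on target. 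One small detail you pass over with ``absorbing the phase'': the phase $e^{i\alpha}$ could a priori depend on $\ket{\phi}$ as well as on $\ket{\chi}$. This is closed by linearity --- the map $\ket{\phi}\mapsto(\one\otimes\bra{\chi})V(\ket{\phi}\otimes\ket{\chi})$ is linear and equals $e^{i\alpha_\phi}U_x\ket{\phi}$, so composing with $U_x^\dagger$ gives a linear operator sending every vector to a multiple of itself, hence a constant phase times the identity --- after which your component-matching argument goes through verbatim.
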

\begin{proof}
\reqref{requirement feedback} states that if the demon is in a state corresponding to a measurement outcome $x$, the system and weight must undergo a closed evolution. Consequently,  $V$ must satisfy
\begin{align}\label{feedback memory equation 1}
V(\ket{\Psi}\otimes \ket{\psi_x}) = (U_x\ket{\Psi})\otimes \ket{\psi_x}
\end{align}
for all $x\in \xx$ and $\ket{\Psi}\in \h\sub{\ww}\otimes \h\sub{\s}$, where $\ket{\psi_x}$ is an eigenstate of the demon observable $Z\sub{\dd}$ as defined in \eq{demon observable}. This is clearly satisfied if $V$ is of the form \eq{feedback unitary}. To prove only if, we note that  \eq{feedback memory equation 1} implies that
\begin{align}\label{eq:Req2onlyif}
V(\ket{\Psi}\otimes \ket{\psi_x}) = (P\sub{\dd}^x V P\sub{\dd}^x)(\ket{\Psi}\otimes \ket{\psi_x})
\end{align}
for all $x\in \xx$, where $P\sub{\dd}^x$ is a projection on the subspace of $\h_\dd$ that contains $\ket{\psi_x}$. Therefore, it follows that 
\begin{align}
V = \sum_{x\in\xx} P\sub{\dd}^x V P\sub{\dd}^x,
\end{align}
and so $V$ must be of the form \eq{feedback unitary}.
\end{proof}
\begin{cor}\label{feedback objectification corollary}
Let the feedback unitary satisfy \reqref{requirement feedback}. Then the state of the compound $\ww+\s+\dd$ will be identical whether  $\dd$ is objectified prior to feedback, or after it.
\end{cor}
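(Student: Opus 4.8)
The plan is to recast the two orderings as compositions of quantum channels and to show that these channels commute on the relevant input. Objectification of $\dd$ is the unselective L\"uders channel $\oo(\cdot):=\sum_{x\in\xx}P\sub{\dd}^x(\cdot)P\sub{\dd}^x$ (with the identity on $\ww+\s$ left implicit), while feedback is conjugation by $V$, i.e.\ the channel $\vv(\cdot):=V(\cdot)V^\dagger$. The state entering the feedback stage is $\sigma:=\rho\subw{\ww}\otimes\rho\sub{\s+\dd}^M$. Objectifying \emph{before} feedback yields $\vv(\oo(\sigma))$, whereas objectifying \emph{after} feedback yields $\oo(\vv(\sigma))$, so it suffices to establish the channel identity $\vv\circ\oo=\oo\circ\vv$.

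First I would invoke the block form $V=\sum_{x\in\xx}U_x\otimes P\sub{\dd}^x$ furnished by \lemref{feedback lemma memory}. Since the $P\sub{\dd}^x$ are mutually orthogonal projectors onto distinct eigenspaces of $Z\sub{\dd}$, so that $P\sub{\dd}^xP\sub{\dd}^y=\delta_{xy}P\sub{\dd}^x$, and since each $U_x$ acts on the disjoint factor $\h\sub{\ww}\otimes\h\sub{\s}$, a one-line computation gives $P\sub{\dd}^yV=U_y\otimes P\sub{\dd}^y=VP\sub{\dd}^y$ for every $y\in\xx$; that is, $V$ commutes with each demon projector. Pushing this commutation through the L\"uders sum then yields
\begin{align}
\oo(\vv(\sigma))
&=\sum_{x\in\xx}P\sub{\dd}^x\,V\sigma V^\dagger\,P\sub{\dd}^x
=\sum_{x\in\xx}V\,P\sub{\dd}^x\sigma P\sub{\dd}^x\,V^\dagger
=V\Big(\sum_{x\in\xx}P\sub{\dd}^x\sigma P\sub{\dd}^x\Big)V^\dagger
=\vv(\oo(\sigma)),
\end{align}
which is precisely the desired identity; evaluating at $\sigma=\rho\subw{\ww}\otimes\rho\sub{\s+\dd}^M$ proves the corollary.

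I do not expect a genuine obstacle here, since all the content is packaged in the block-diagonal structure of $V$ from \lemref{feedback lemma memory}. The only subtlety worth flagging is that the $P\sub{\dd}^x$ may have arbitrary rank, so objectification is an unselective L\"uders measurement onto possibly multi-dimensional eigenspaces rather than onto rank-one projectors; I would therefore take care that the commutation $P\sub{\dd}^yV=VP\sub{\dd}^y$ relies only on orthogonality and the tensor-product structure, and in no way on the projectors being one-dimensional, so the argument remains valid in full generality. A second point to record is that objectification is represented by demon-only projectors even on the tripartite space, which is consistent with the premeasurement $U_M$ having correlated only $\s$ and $\dd$ while $\rho\subw{\ww}$ enters as an uncorrelated tensor factor.
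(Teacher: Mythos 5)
Your proof is correct and takes essentially the same route as the paper's: both invoke the block form $V=\sum_{x\in\xx}U_x\otimes P\sub{\dd}^x$ from \lemref{feedback lemma memory} to obtain $[V,P\sub{\dd}^x]_-=\zero$, and then commute $V$ through the unselective L\"uders sum to identify the two orderings. The only difference is cosmetic: you phrase the argument as a commutation of channels $\vv\circ\oo=\oo\circ\vv$ and spell out the one-line computation $P\sub{\dd}^x V = U_x\otimes P\sub{\dd}^x = V P\sub{\dd}^x$, which the paper asserts without calculation.
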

\begin{proof}
The compound of $\s+\dd$ after premeasurement and objectification is given by  \eq{Gemenge}. After feedback, the state of the compound $\ww+\s+\dd$ is 
\begin{align}
&V(\rho\subw{\ww}\otimes\rho\sub{\s+\dd}^{M,O})V^\dagger \nonumber \\ 
& \, \, \,= V\left(\sum_{x \in \xx}P\sub{\dd}^x(\rho\subw{\ww}\otimes\rho\sub{\s+\dd}^{M}) P\sub{\dd}^x\right)V^\dagger.
\end{align}
If the feedback unitary is of the form \eq{feedback unitary}, then $[V,  P\sub{\dd}^x]_-=\zero$ for all $x\in \xx$, and so we have
\begin{align}
&V\left(\sum_{x \in \xx}P\sub{\dd}^x(\rho\subw{\ww}\otimes\rho\sub{\s+\dd}^{M}) P\sub{\dd}^x\right)V^\dagger \nonumber \\
&=\sum_{x \in \xx}P\sub{\dd}^x V(\rho\subw{\ww}\otimes\rho\sub{\s+\dd}^{M})V^\dagger P\sub{\dd}^x.
\end{align}
The second line corresponds to performing feedback after premeasurement, but before objectification has occurred. 
\end{proof}

We now show that if $V$ as defined by \eq{feedback unitary} is to satisfy \reqref{requirement energy}, then each $U_x$ must conserve $H\sub{\ww}+ H\sub{\s}$. 

\

\begin{lem}\label{feedback lemma energy}
Let $V$ be a feedback unitary operator that satisfies \reqref{requirement feedback}. It will also satisfy \reqref{requirement  energy} if and only if: (i)  $[U_x, H\sub{\ww} + H\sub{\s}]_-=\zero$ for all $x\in \xx$; and (ii) for every subset $\xx' \subseteq \xx$ such that $U_x=U_y$ for all $x,y \in \xx'$, $\sum_{x\in \xx'} [P\sub{\dd}^x, H\sub{\dd}]_- = \zero$.  
\end{lem}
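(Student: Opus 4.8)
The plan is to turn \reqref{requirement energy} into a single operator equation and then decompose it along the eigenspaces of $Z\sub{\dd}$. By \defref{definition energy conserving}, and since feedback acts as $\rho\mapsto V\rho V^\dagger$ on $\ww+\s+\dd$, energy conservation with respect to $H:=H\sub{\ww}+H\sub{\s}+H\sub{\dd}$ requires $\tr[H V\rho V^\dagger]=\tr[H\rho]$ for every state $\rho$; by cyclicity of the trace and unitarity of $V$ this holds for all $\rho$ if and only if $[V,H]_-=\zero$. Inserting $V=\sum_{x\in\xx}U_x\otimes P\sub{\dd}^x$ from \eq{feedback unitary}, and using that $H\sub{\ww}+H\sub{\s}$ is trivial on $\h\sub{\dd}$ while $H\sub{\dd}$ is trivial on $\h\sub{\ww}\otimes\h\sub{\s}$, I would expand
\begin{align}
[V,H]_-=\sum_{x\in\xx}[U_x,H\sub{\ww}+H\sub{\s}]_-\otimes P\sub{\dd}^x+\sum_{x\in\xx}U_x\otimes[P\sub{\dd}^x,H\sub{\dd}]_-,
\end{align}
so the whole lemma reduces to characterising when this operator vanishes. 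Throughout I use that $\{P\sub{\dd}^x\}$ is the spectral family of $Z\sub{\dd}$, hence a complete set of mutually orthogonal projections.

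For the ``if'' direction I assume (i) and (ii). Condition (i) annihilates the first sum term by term. For the second sum I partition $\xx$ into the maximal blocks $\xx'$ on which $U_x$ takes a constant value; pulling that common unitary out of each block rewrites the second sum as $\sum_{\xx'} U\sub{\xx'}\otimes(\sum_{x\in\xx'}[P\sub{\dd}^x,H\sub{\dd}]_-)$, the outer sum running over blocks and $U\sub{\xx'}$ denoting the common value of $U_x$ on $\xx'$; condition (ii) then sends every inner sum to $\zero$. Hence $[V,H]_-=\zero$ and \reqref{requirement energy} is met.

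For the ``only if'' direction the key device is to sandwich the identity $[V,H]_-=\zero$ between $P\sub{\dd}^y$ and $P\sub{\dd}^z$ and use $P\sub{\dd}^yP\sub{\dd}^xP\sub{\dd}^z=\delta_{xy}\delta_{xz}P\sub{\dd}^x$. The diagonal blocks $y=z$ kill the second sum identically and leave $[U_y,H\sub{\ww}+H\sub{\s}]_-\otimes P\sub{\dd}^y=\zero$, which gives (i) because $P\sub{\dd}^y\neq\zero$. The off-diagonal blocks $y\neq z$ kill the first sum and leave $(U_y-U_z)\otimes P\sub{\dd}^yH\sub{\dd}P\sub{\dd}^z=\zero$, forcing $P\sub{\dd}^yH\sub{\dd}P\sub{\dd}^z=\zero$ whenever $U_y\neq U_z$; in words, $H\sub{\dd}$ carries no matrix elements between eigenspaces of $Z\sub{\dd}$ lying in distinct blocks. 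Writing $Q\sub{\dd}^{\xx'}=\sum_{x\in\xx'}P\sub{\dd}^x$ for a block projection, this block-diagonality yields $\sum_{x\in\xx'}[P\sub{\dd}^x,H\sub{\dd}]_-=[Q\sub{\dd}^{\xx'},H\sub{\dd}]_-=\zero$, which is precisely (ii).

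The step I expect to be the main obstacle is exactly this derivation of (ii): one is tempted to conclude $\sum_{x\in\xx'}[P\sub{\dd}^x,H\sub{\dd}]_-=\zero$ by treating the distinct $U_x$ as linearly independent, but distinct unitaries can be linearly dependent, so inverting a linear system in the $U_x$ is not available. The off-diagonal sandwiching sidesteps this by extracting the structural constraint $P\sub{\dd}^yH\sub{\dd}P\sub{\dd}^z=\zero$ directly. It is also worth flagging that (ii) must be read over the maximal constant-$U$ blocks and not over arbitrary subsets: taking all $U_x$ equal shows that the individual commutators $[P\sub{\dd}^x,H\sub{\dd}]_-$ need not vanish, so the per-block grouping is indispensable in both directions.
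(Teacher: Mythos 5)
Your proof is correct and follows essentially the same route as the paper's: reduce \reqref{requirement energy} to $[V,H]_-=\zero$, expand the commutator using additivity of the Hamiltonian and the form \eq{feedback unitary}, and then probe the resulting identity across the eigenspaces of $Z\sub{\dd}$. The one place where you improve on the paper is the only-if direction: the paper sandwiches with individual vectors $\bra{\psi_x}\cdot\ket{\psi_y}$, obtains the dichotomy ``either the demon projectors commute with $H\sub{\dd}$ or $U_x=U_y$,'' and then disposes of the equal-unitary case by ruling out a proportionality option (a) in favour of option (b); your projector sandwich $P\sub{\dd}^y[V,H]_-P\sub{\dd}^z$ instead extracts the structural fact $P\sub{\dd}^yH\sub{\dd}P\sub{\dd}^z=\zero$ whenever $U_y\neq U_z$, from which $[\sum_{x\in\xx'}P\sub{\dd}^x,H\sub{\dd}]_-=\zero$ on each maximal block follows at once, and the diagonal sandwich yields (i) without any case analysis. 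Your closing remark is also well taken: condition (ii) must be read over the maximal constant-$U$ blocks (as the paper's own proof does, via its ``maximal subsets $\xx'$''), since under the literal ``every subset'' reading the singleton subsets would force $[P\sub{\dd}^x,H\sub{\dd}]_-=\zero$ individually, and the only-if direction would then fail for $V=U\otimes\one$ with generic $H\sub{\dd}$.
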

\begin{proof}
In order for $V$ as defined by \eq{feedback unitary} to conserve the total energy, by \defref{definition energy conserving} we require that 
\begin{align}
&\tr[ H V \rho V^\dagger] = \tr[H \rho]\label{Work extraction energy condition}
\end{align}
for all states $\rho$ on $\h\sub{\ww}\otimes \h\sub{\s}\otimes \h\sub{\dd}$, where $H = H\sub{\ww} + H\sub{\s} + H\sub{\dd}$. Therefore, $V$ must commute with the total Hamiltonian. Because of the additivity of the Hamiltonian, $[V, H]_- = \zero$ can be written as
\begin{align}\label{feedback commutation}
\sum_{x\in\xx} [U_x, H\sub{\ww} + H\sub{\s}]_- \otimes P\sub{\dd}^x = - \sum_{x\in \xx} U_x\otimes [P\sub{\dd}^x, H\sub{\dd}]_-.
\end{align}
Given an arbitrary pair of states $\ket{\psi_x}\in P^x\sub{\dd}(\h\sub{\dd})$ and $\ket{\psi_y}\in P^y\sub{\dd}(\h\sub{\dd})$, such that $x\ne y$, and referring to the right hand and left hand sides of \eq{feedback commutation} as RHS and LHS, respectively, we see that 
\begin{align}
\<\psi_x| \mathrm{LHS} |\psi_y\> &= \zero, \nonumber \\ 
\<\psi_x| \mathrm{RHS} |\psi_y\> &= \<\psi_x| H\sub{\dd}|\psi_y\> (U_y - U_x).
\end{align}
However, given \eq{feedback commutation}, we must have $\<\psi_x| \mathrm{LHS} |\psi_y\> =  \<\psi_x| \mathrm{RHS} |\psi_y\>$.  This is satisfied if either:  (i) $[P\sub{\dd}^z,H\sub{\dd}]_-=\zero$ for $z\in \{x,y\}$; or (ii) $U_x = U_y$. Option (i) satisfies the if statement of the Lemma. Option (ii) implies that \eq{feedback commutation} is satisfied if
\begin{align}\label{feedback commutation equal unitary}
 [U_{\xx'}, H\sub{\ww} + H\sub{\s}]_- \otimes P\sub{\dd}^{\xx'} = - U_{\xx'}\otimes [P\sub{\dd}^{\xx'}, H\sub{\dd}]_-
\end{align}
for all maximal subsets $\xx' \subseteq \xx$ such that, given all $x,y \in \xx'$, $U_x = U_y = U_{\xx'}$. Here we define $P\sub{\dd}^{\xx'}:= \sum_{x\in \xx'} P\sub{\dd}^x$. 

\eq{feedback commutation equal unitary} is satisfied if : (a) $[P\sub{\dd}^{\xx'}, H\sub{\dd}]_- \propto  P\sub{\dd}^{\xx'}$ and $[U_{\xx'}, H\sub{\ww} + H\sub{\s}]_- \propto U_{\xx'}$; or (b) if $[P\sub{\dd}^{\xx'}, H\sub{\dd}]_- =\zero$ and $[U_{\xx'}, H\sub{\ww} + H\sub{\s}]_- =\zero$. It is easy to verify that (a) is impossible, and so only option (b) is available. This concludes the proof of the only if portion of the Lemma.  
\end{proof}

For each measurement outcome $x$, as a result of the global feedback unitary operator $V$ given in \eq{feedback unitary}, $\s$ and $\ww$ undergo the complementary CPTP maps
\begin{align}\label{feedback CPTP maps}
\Lambda_x&: \prs{\tilde \varphi_x} \mapsto \tr\sub{\ww}[U_x(\rho\subw{\ww}\otimes\prs{\tilde \varphi_x})U_x^\dagger], \nonumber \\
\Lambda_x^* &: \rho\subw{\ww} \mapsto \tr\sub{\s}[U_x(\rho\subw{\ww}\otimes\prs{\tilde \varphi_x})U_x^\dagger],
\end{align}
where we recall that $\{\ket{\tilde \varphi_x}\}_{x\in\xx}$ are the post-measurement states of $\s$. 

We now wish to define the (average) work that is transferred from $\s$ into $\ww$, for each measurement outcome, as a result of feedback. To this end, we use the following definition.

\

\begin{defn}\label{definition work}
For each measurement outcome $x\in \xx$, the average work transferred into the weight is defined as 
\begin{align}\label{equation definition work}
W_x := F(\Lambda_x^*[\rho\subw{\ww}]) - F(\rho\subw{\ww}),
\end{align}
where: $\Lambda_x^*$ is the CPTP map defined by \eq{feedback CPTP maps};
\begin{align}\label{free energy}
F(\rho) := \tr[H \rho] - K_B T \, S(\rho)
\end{align}
is the non-equilibrium free energy of a system with state $\rho$, relative to the Hamiltonian $H$ and temperature $T$; and $S(\rho):= -\tr[\rho \ln(\rho)]$ is the von-Neumann entropy of $\rho$. 
\end{defn}
This definition has been argued for previously in \cite{Gemmer2015,Gallego2015-Work-definition}. Even though the thermal reservoir is not involved during feedback, it is still part of the thermodynamic context of the Szilard engine. As such, work can be extracted from both the system, and the weight, by letting them interact appropriately with the reservoir. Therefore, the quantifier of work transfer must be temperature dependent, in the form of free energy difference, in order to : (i) ensure consistency with the ``internal'' description of work extraction from $\s$, wherein the weight is not included in the quantum description; and (ii) avoid violation of the second law.     For a detailed argument we refer the reader to \app{appendix work definition}. We note that an alternative definition for work transfer to the weight is the increase in the internal energy of $\ww$. While this formulation will be consistent with the second law only if the feedback unitary $V$ induces unital dynamics on the system $\s$ \cite{Morikuni2017}, \defref{definition work} does not suffer from such limitations. Moreover, \defref{definition work} reduces to the increase in internal energy  when \featref{feature same entropy} is satisfied.

Now that we have defined work extraction, we may analyse this with respect to \featref{feature same entropy}.

\

\begin{defn}\label{definition entropy increase}
The Szilard engine satisfies \featref{feature same entropy} if for all $x\in \xx$, 
\begin{align}
S(\Lambda_x^*[\rho\subw{\ww}]) = S(\rho\subw{\ww}). 
\end{align}
\end{defn}

\begin{lem}\label{Feature 2 lemma}
When the Szilard engine satisfies \featref{feature same entropy}, it follows that 
\begin{align}\label{feature 2 lemma equation}
W_x \leqslant \<\tilde \varphi_x|H\sub{\s}|\tilde \varphi_x\> - \min[\sigma(H\sub{\s})].
\end{align}
where $\sigma(H\sub{\s})$ is the spectrum of $H\sub{\s}$.  
\end{lem}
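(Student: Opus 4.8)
The plan is to exploit the fact that, under \featref{feature same entropy}, the entropic contribution to the free-energy difference vanishes, reducing $W_x$ to a purely energetic quantity, and then to convert the weight's energy gain into the system's energy loss by invoking the energy conservation of $U_x$ established in \lemref{feedback lemma energy}. First I would note that \defref{definition entropy increase} gives $S(\Lambda_x^*[\rho\subw{\ww}]) = S(\rho\subw{\ww})$, so that in \eq{equation definition work} the two entropy terms cancel and
\begin{align}
W_x = \tr[H\sub{\ww}\Lambda_x^*(\rho\subw{\ww})] - \tr[H\sub{\ww}\rho\subw{\ww}],
\end{align}
i.e. the work equals the increase in the internal energy of the weight.

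Next I would bring in energy conservation. By \lemref{feedback lemma energy}, \reqref{requirement energy} forces $[U_x, H\sub{\ww}+H\sub{\s}]_-=\zero$, so $U_x$ leaves the expectation value of $H\sub{\ww}+H\sub{\s}$ invariant on the initial product state $\rho\subw{\ww}\otimes\prs{\tilde \varphi_x}$. Recalling that $\Lambda_x$ and $\Lambda_x^*$ in \eq{feedback CPTP maps} are the two reduced channels of this same unitary evolution, taking partial traces gives
\begin{align}
\tr[H\sub{\ww}\Lambda_x^*(\rho\subw{\ww})] + \tr[H\sub{\s}\Lambda_x(\prs{\tilde \varphi_x})] = \tr[H\sub{\ww}\rho\subw{\ww}] + \<\tilde \varphi_x|H\sub{\s}|\tilde \varphi_x\>.
\end{align}
Combining this with the previous display eliminates the weight Hamiltonian entirely and yields the exact identity
\begin{align}
W_x = \<\tilde \varphi_x|H\sub{\s}|\tilde \varphi_x\> - \tr[H\sub{\s}\Lambda_x(\prs{\tilde \varphi_x})].
\end{align}

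Finally, since $\Lambda_x(\prs{\tilde \varphi_x})$ is a legitimate density operator on $\h\sub{\s}$, its mean energy cannot fall below the ground-state energy, $\tr[H\sub{\s}\Lambda_x(\prs{\tilde \varphi_x})] \geqslant \min[\sigma(H\sub{\s})]$, which on substitution produces the stated inequality \eq{feature 2 lemma equation}. I expect the only step requiring genuine care to be the energy-balance identity above: one must check that the joint conservation of $H\sub{\ww}+H\sub{\s}$ under $U_x$ separates cleanly, after the partial traces defining the complementary channels $\Lambda_x$ and $\Lambda_x^*$, into the sum of the reduced weight and system energies, with no cross terms surviving because $H\sub{\ww}+H\sub{\s}$ is additive and the initial state is a product. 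Everything else follows immediately from the definitions and the variational characterization of the least eigenvalue of $H\sub{\s}$.
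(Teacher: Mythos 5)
Your proposal is correct and follows essentially the same route as the paper's proof: both use \defref{definition work}, invoke the energy conservation $[U_x, H\sub{\ww}+H\sub{\s}]_-=\zero$ from \lemref{feedback lemma energy} to trade the weight's energy gain for the system's energy loss, and then bound $\tr[H\sub{\s}\Lambda_x(\prs{\tilde\varphi_x})]$ below by $\min[\sigma(H\sub{\s})]$. The only difference is ordering -- you apply \featref{feature same entropy} at the start to cancel the entropy terms, whereas the paper carries them to the end -- which is immaterial to the argument.
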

\begin{proof}
The work transferred into $\ww$ is, by \defref{definition work} and \lemref{feedback lemma energy}, given as 
\begin{align}\label{conditional work weight}
W_x&:=  \tr[H\sub{\s}(\prs{\tilde\varphi_x} - \Lambda_x[\tilde\varphi_x])] \nonumber \\ &\, \, \, \, + K_B T \,(S(\rho\subw{\ww}) - S(\Lambda_x^*[\rho\subw{\ww}])).
\end{align}
As $\tr[H\sub{\s} \Lambda_x[\tilde\varphi_x]]\geqslant \min[\sigma(H\sub{\s})]$, it follows that 
\begin{align}\label{weight entropy increase detrimental}
W_x&\leqslant  \<\tilde \varphi_x|H\sub{\s}|\tilde \varphi_x\> - \min[\sigma(H\sub{\s})] \nonumber \\ &\, \, \, \, + K_B T \,(S(\rho\subw{\ww}) - S(\Lambda_x^*[\rho\subw{\ww}])).
\end{align}
If the Szilard engine satisfies \featref{feature same entropy}, then by \defref{definition entropy increase} we have \eq{feature 2 lemma equation}.
\end{proof}

\section{The impossibility theorem}

\begin{figure}[!b]
\includegraphics[width= 8 cm]{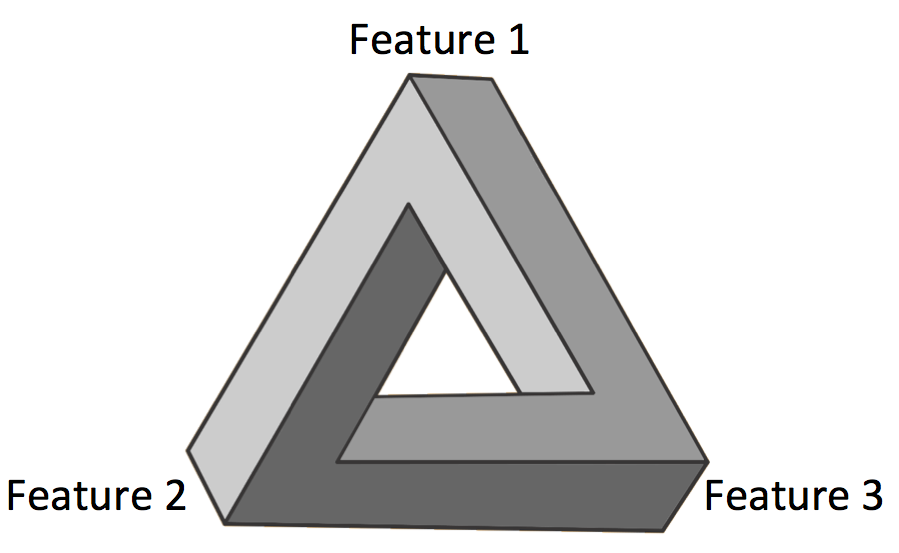}
\caption{ The impossible triangle of a quantum Szilard engine powered by projective measurements. Features 1, 2, and 3 signify respectively the repeatability of the measurement; invariant weight entropy; and the reliability of the engine. The fact that only two vertices of the impossible triangle can be physically connected, but not the third, represents the result that all three features cannot be simultaneously satisfied. }\label{Nogo-triangle}
\end{figure}

We are now ready to prove a main result of this paper. The impossibility theorem is  illustrated  by  Penrose's impossible triangle in \fig{Nogo-triangle}. 

\

\begin{thm}\label{impossibility theorem}
Consider a quantum Szilard engine that, during the feedback stage, operates in thermal isolation. Let the engine satisfy \reqref{requirement energy} and \reqref{requirement feedback}. It follows that if the engine satisfies any two from \featref{feature repeatable measurement}, \featref{feature same entropy}, and \featref{feature positive work extraction}, it will necessarily fail to satisfy the third.
\end{thm}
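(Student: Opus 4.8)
The plan is to first collapse the three separate implications into a single statement. Each of ``\featref{feature repeatable measurement} and \featref{feature same entropy} imply not \featref{feature positive work extraction}'', ``\featref{feature repeatable measurement} and \featref{feature positive work extraction} imply not \featref{feature same entropy}'', and ``\featref{feature same entropy} and \featref{feature positive work extraction} imply not \featref{feature repeatable measurement}'' is logically equivalent to the single negative claim that the three features cannot hold simultaneously, i.e.\ the negation of their conjunction. It therefore suffices to assume that \featref{feature repeatable measurement} and \featref{feature same entropy} both hold, and to derive a violation of \featref{feature positive work extraction}; this establishes all three implications at once.

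Assuming \featref{feature repeatable measurement}, I would invoke \lemref{Repeatability lemma}, which (under \reqref{requirement energy}) identifies repeatability with the post-measurement states $\{\ket{\tilde \varphi_x}\}_{x\in\xx}$ being eigenvectors of $H\sub{\s}$. Since repeatability additionally forces $\ket{\tilde \varphi_x} = e^{i\theta}\ket{\varphi_x}$, these states coincide up to phase with the orthonormal eigenvectors $\{\ket{\varphi_x}\}_{x\in\xx}$ of the non-degenerate observable $M\sub{\s}$, and hence form a complete orthonormal eigenbasis of $H\sub{\s}$ (there being $d = \dim\h\sub{\s}$ of them). The multiset of diagonal entries $\{\<\tilde \varphi_x|H\sub{\s}|\tilde \varphi_x\>\}_{x\in\xx}$ then exhausts the spectrum $\sigma(H\sub{\s})$ counted with multiplicity, so there exists an outcome $x_0\in\xx$ with $\<\tilde \varphi_{x_0}|H\sub{\s}|\tilde \varphi_{x_0}\> = \min[\sigma(H\sub{\s})]$. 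Next, assuming \featref{feature same entropy}, \lemref{Feature 2 lemma} supplies the bound $W_x \leqslant \<\tilde \varphi_x|H\sub{\s}|\tilde \varphi_x\> - \min[\sigma(H\sub{\s})]$ for every outcome. Evaluated at $x_0$ this reads $W_{x_0} \leqslant 0$, contradicting the requirement $W_x > 0$ for all $x$ demanded by \featref{feature positive work extraction}, and the theorem follows.

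The substantive point---rather than any delicate calculation---is guaranteeing that some post-measurement state actually realises the ground-state energy $\min[\sigma(H\sub{\s})]$. This hinges on the completeness of the eigenbasis, which I obtain from the non-degeneracy of $M\sub{\s}$ (so that its $d$ eigenvectors span $\h\sub{\s}$) together with the repeatability relation $\ket{\tilde \varphi_x}\propto\ket{\varphi_x}$. Were completeness to fail, the post-measurement states could in principle all avoid the ground-state subspace, leaving the bound of \lemref{Feature 2 lemma} strictly positive and yielding no contradiction. For this reason the degenerate-observable generalisation deferred to \app{degenerate observable section} would require re-establishing the spanning property before the same argument could be run.
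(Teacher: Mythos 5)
Your proposal is correct, and its technical core coincides with the paper's: \lemref{Repeatability lemma} forces the post-measurement states to be eigenvectors of $H\sub{\s}$, some outcome $x_0$ therefore realises $\<\tilde\varphi_{x_0}|H\sub{\s}|\tilde\varphi_{x_0}\> = \min[\sigma(H\sub{\s})]$, and \lemref{Feature 2 lemma} then yields $W_{x_0}\leqslant 0$. Where you genuinely differ is the decomposition: the paper proves all three pairwise implications as separate cases (in the second case concluding that the weight entropy must strictly decrease, in the third that the post-measurement states cannot all be eigenvectors of $H\sub{\s}$), whereas you prove only the single implication that \featref{feature repeatable measurement} and \featref{feature same entropy} exclude \featref{feature positive work extraction}, observing that each pairwise implication is propositionally equivalent to the negation of the conjunction of all three features, so one case settles everything. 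That collapse is logically sound and buys economy; the paper's redundant three-case structure buys explicitness about which feature breaks, and how, under each pairing. A further point in your favour: you justify why some post-measurement state actually attains the ground-state energy --- non-degeneracy of $M\sub{\s}$ plus the repeatability relation $\ket{\tilde\varphi_x}\propto\ket{\varphi_x}$ make the post-measurement states a complete orthonormal eigenbasis of $H\sub{\s}$, so the spectrum is exhausted with multiplicity --- a step the paper asserts without comment. Your closing remark that this completeness is exactly what fails for degenerate observables is consistent with \app{degenerate observable section}, where degeneracy together with an inefficient measurement is precisely the route the paper uses to circumvent the theorem.
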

\begin{proof}
Let the engine satisfy \featref{feature repeatable measurement} and \featref{feature same entropy}. By \lemref{Repeatability lemma} the post-measurement states $\{\ket{\tilde \varphi_x}\}_{x\in \xx}$ are the eigenvectors of $M\sub{\s}$ and, hence, $H\sub{\s}$. Consequently, for some outcome $x\in \xx$, $\<\tilde \varphi_x|H\sub{\s}|\tilde \varphi_x\> = \min[\sigma(H\sub{\s})]$. By \lemref{Feature 2 lemma}, for this outcome we have $W_x\leqslant 0$, and \featref{feature positive work extraction} cannot be satisfied. 

Let the engine satisfy \featref{feature repeatable measurement} and \featref{feature positive work extraction}. By \lemref{Repeatability lemma} the post-measurement states $\{\ket{\tilde \varphi_x}\}_{x\in \xx}$ are the eigenvectors of $M\sub{\s}$ and, hence, $H\sub{\s}$. Consequently, for some outcome $x\in \xx$, $\<\tilde \varphi_x|H\sub{\s}|\tilde \varphi_x\> = \min[\sigma(H\sub{\s})]$. By \lemref{Feature 2 lemma}, for this outcome $W_x>0$ only if $S(\Lambda_x^*[\rho\subw{\ww}]) < S(\rho\subw{\ww})$. Hence, \featref{feature same entropy} cannot be satisfied. 

Let the engine satisfy \featref{feature same entropy} and  \featref{feature positive work extraction}. By \lemref{Feature 2 lemma}, for all $x\in \xx$, the work is bounded as $W_x \leqslant \<\tilde \varphi_x|H\sub{\s}|\tilde \varphi_x\> - \min[\sigma(H\sub{\s})]$. As $W_x>0$ for all $x\in \xx$, it follows that $\<\tilde \varphi_x|H\sub{\s}|\tilde \varphi_x\> > \min[\sigma(H\sub{\s})]$ for all $x\in \xx$. Therefore, the post-measurement states $\{\ket{\tilde \varphi_x}\}_{x\in\xx}$ cannot be the eigenvectors of $H\sub{\s}$. By \lemref{Repeatability lemma}, \featref{feature repeatable measurement} cannot be satisfied.
\end{proof}

\thmref{impossibility theorem}, simply stated, says that if the system is measured with respect to a non-degenerate observable, in a repeatable and energy conserving fashion, it must be projected onto the eigenstates of $H\sub{\s}$. Consequently, if we do not allow the weight's entropy to decrease,   then for the outcome that projects the system onto the groundstate of $H\sub{\s}$, zero work can be extracted.

In \app{Qubit examples}, we illustrate the incompatibility between the three features by looking at a concrete model where both $\s$ and $\dd$ are qubits, while $\ww$ is a harmonic oscillator. In \app{circumventing theorem 1} we show that \thmref{impossibility theorem} can be circumvented if: (i) the thermal reservoir is involved during the feedback stage so that, just as in the classical Szilard engine, the source of work will be heat drawn from the reservoir; or (ii) the observable measured on $\s$ is degenerate and is measured ``inefficiently''.

\section{Net work extraction per cycle}

\fig{work extraction model} depicts a single cycle of the Szilard engine under consideration. In \app{Energetic contributions in one cycle of the quantum Szilard engine} we evaluate the net work extraction per cycle,  wherein we do not distinguish between measurement outcomes. Labeling the ``coarse-grained'' work transferred to the weight as $W_\xx:= F(\rho\sub{\ww}') - F(\rho\sub{\ww})$, and the work cost of erasure as $W_R$, the net coarse-grained work  is shown to  obey the inequality 
\begin{align}\label{net average work extraction main}
W^\mathrm{net}_\xx:= W_\xx - W_R &\leqslant     F(\rho\sub{\s}) - F(\rho\sub{\s}'),
\end{align}
where  $\rho\subw{\s}'$ and $\rho\subw{\ww}'$ are the average states of $\s$ and $\ww$ at the end of the cycle, respectively, obtained by sampling the states $\Lambda_x(\tilde \varphi_x)$ and $\Lambda_x^*(\rho\sub{\ww})$  by the probability distribution $p^M_{\rho_{\!_\s}}(x)$ as defined by \eq{Born rule}. We note that \eq{net average work extraction main} holds irrespective of whether the Szilard engine satisfies any of \featref{feature repeatable measurement}, \featref{feature same entropy}, or \featref{feature positive work extraction}.  Moreover, we note that the coarse-grained work is  generally smaller than the average work, i.e., $W_\xx \leqslant \<W_x\> := \sum_{x\in \xx} p^M_{\rho_{\!_\s}}(x) W_x$, where $W_x$ is defined in \eq{equation definition work}. While the coarse-grained work extraction obeys the second law, the average work will not; if $\rho\sub{\s}$ is thermal, then $W^\mathrm{net}_\xx \leqslant 0$ whereas $\<W_x^\mathrm{net}\>:= \<W_x\> - W_R$ can be positive. 

To be sure, the second law is a statistical statement, held true precisely when we do not have access to the individual measurement outcomes. Let us recall the definition for work transferred into the weight when it transforms as $\rho\sub{\ww} \mapsto \Lambda_x^*(\rho\sub{\ww})$,  given by \defref{definition work} and articulated in \app{appendix work definition}. This was given operational meaning as being the maximum value of  work that can be extracted from the weight, by an isothermal process $\Lambda_x^*(\rho\sub{\ww}) \mapsto \rho\sub{\ww}$ involving the reservoir of temperature $T$.  However, if we were to forget the measurement outcomes, then we could not use such information to tailor our process of extracting work from the weight. Indeed, this protocol must be designed with only the average state of the weight in mind. The maximum value of work extractable from the weight, given an isothermal process $\rho\sub{\ww}' \mapsto \rho\sub{\ww}$, is precisely  $W_\xx$.

\section{Discussion}
 We give  a general mathematical description of a quantum Szilard engine that  operates in two stages, namely, projective measurement and feedback. In our model, in contradistinction to the classical Szilard engine, the feedback stage does not involve the thermal reservoir.  Here, the source of work is the energetic changes due to (non-degenerate)  projective measurements. In order to avoid cheating by the inclusion of hidden work sources, we impose energy conservation on the measuring process. As a result of the Wigner-Araki-Yanase theorem, the observables that the demon can measure will be limited to those that commute with the system's Hamiltonian.

We showed that while the Szilard engine, in lieu of a thermal reservoir, can be powered by (non-degenerate) projective measurements, it cannot simultaneously satisfy three features of the classical Szilard engine model; the conjunction of any two will preclude the possibility of the third.   These features are: (i) the measurement performed by the demon is repeatable, meaning that conditional on obtaining outcome $x$, a subsequent measurement of the same observable would yield $x$ with certainty; (ii) the weight's entropy does not change as a result of feedback; and (iii) work extraction is reliable, i.e., is strictly positive for all measurement outcomes. This observation is a first step towards developing ``second-law-like'' relations in the context of measurement-assisted feedback control beyond thermality. While the second law results from entropic considerations, these ``second-law-like'' relations would result from energy conservation of unitary interactions that implement measurements.    

The Szilard engine here discussed is, strictly speaking, not cyclical; at the end of a cycle of work extraction, the state of the system, $\rho\sub{\s}'$, will not be the same as its initial state, $\rho\sub{\s}$. For the engine to be made cyclical, therefore, we must have at our disposal an infinite supply of systems with state $\rho\sub{\s}$ such that, at the end of each cycle, the system's state is swapped with one of these. One example of such ``free resources'' is if $\rho\sub{\s}$ is thermal. Here, we may interpret the closure of the cycle to result from the system being brought to thermal equilibrium with the reservoir.

The strict  non-cyclicality of the engine notwithstanding, the statistical second law will not be violated. This is because, when taking the erasure cost of the demon into consideration, the total net work extracted from the system will be bounded by the decrease in its free energy -- a quantity that will not be positive if the system is initially at thermal equilibrium. However, this requires a careful consideration of how one should evaluate work when choosing to ``forget'' the measurement outcomes -- precisely the domain where the second law is applicable. As with unselective measurements,  the work transferred to the weight when the indvidual measurement outcomes are not distinguished from one another must be defined by how the weight's state changes \emph{on average}. Indeed, the extractable work from the weight, when the measurement outcomes are forgotten, is smaller than the average value of work, when the measurement outcomes are taken into consideration.

\begin{acknowledgments}
The authors would like to thank L. D. Loveridge, K. Abdelkhalek, D. Reeb, K. Hovhannisyan, and H. Miller for the useful discussions that helped in developing the ideas presented in this paper. J. A. acknowledges support from EPSRC, grant EP/M009165/1, and the Royal Society. This research was supported by the COST network MP1209 ``Thermodynamics in the quantum regime''. 

\end{acknowledgments}

\makeatletter
\renewcommand\p@subsection{\thesection\,}
\makeatother
\makeatletter
\renewcommand\p@subsubsection{\thesection\,\thesubsection\,}
\makeatother

\appendix

\section{Definition of work transferred into the weight}\label{appendix work definition}

Here we wish to justify defining the work transferred into the weight, as a result of feedback, by \defref{definition work}. To this end, let us first recall a known result from standard non-equilibrium quantum thermodynamics. In the \emph{internal} description of work extraction, in contradistinction to the \emph{external} description, the weight is not included in the quantum formalism. Here, the work extracted from a system undergoing a (non-energy conserving) unitary evolution is defined as the decrease in its internal energy. Consequently, if a system $\s$ undergoes a transformation $\rho\sub{\s} \mapsto \rho\sub{\s}':= \tr\sub{\rr}[U(\rho\sub{\s}\otimes \tau^\beta\sub{\rr})U^\dagger]$, where $U$ is a global unitary operator and $\tau\sub{\rr}^\beta := e^{-\beta H_\rr}/\tr[e^{-\beta H_\rr}]$ is the thermal state of the thermal reservoir $\rr$, with $\beta := (K_B T)^{-1}$ the inverse temperature, the work extracted obeys the inequality 
\begin{align}\label{free energy work inequality}
W_\mathrm{ext}(\rho\sub{\s} \mapsto \rho\sub{\s}')&:= \tr[(H\sub{\s} + H\sub{\rr})\rho\sub{\s}\otimes \tau^\beta\sub{\rr}] \nonumber \\ 
& \, \, - \tr[(H\sub{\s} + H\sub{\rr})U(\rho\sub{\s}\otimes \tau^\beta\sub{\rr})U^\dagger]\nonumber \\ 
&\leqslant F(\rho\sub{\s}) - F(\rho\sub{\s}'),
\end{align}
with the equality obtained when the interaction between system and thermal reservoir is ``quasi-static'' \cite{Anders-discrete-thermo}.

Therefore, \defref{definition work} can be justified with the following argument. When the weight interacts with the system, thereby transforming as $\rho\subw{\ww} \mapsto \Lambda_x^*(\rho\subw{\ww})$, where $\Lambda_x^*$ is given by \eq{feedback CPTP maps}, work is \emph{transferred} to it. We may then perform the reverse transformation on the weight, i.e., $ \Lambda_x^*(\rho\subw{\ww}) \mapsto \rho\subw{\ww}$, by an appropriate unitary interaction with the thermal reservoir, so as to extract this work. The work extracted here will be in the internal description, as there is no second weight into which the work is being transferred. By \eq{free energy work inequality}, the work we may extract obeys the inequality
\begin{align}\label{work extraction from weight}
W_\mathrm{ext}(\Lambda_x^*(\rho\subw{\ww})  \mapsto \rho\subw{\ww}) \leqslant F(\Lambda_x^*[\rho\subw{\ww}]) - F(\rho\subw{\ww}).
\end{align}
Clearly, the work transferred into the weight must be at least as great as the work that can be extracted from the weight, i.e., 
\begin{align}
W_x \geqslant W_\mathrm{ext}(\Lambda_x^*(\rho\subw{\ww})  \mapsto \rho\subw{\ww}).
\end{align}
A natural assumption to make is that, since the process of transferring work into the weight is independent of the process by which work is extracted from the weight, the right hand side of the above equation should be replaced by the upper bound of \eq{work extraction from weight}. If we also take the view that transferring more work into the weight than can possibly be extracted from it is physically meaningless, we arrive at \defref{definition work}. 

We also note that \defref{definition work} is consistent with the internal description of work from the system $\s$, and that it satisfies the second law. 

\

\begin{lem}
Let the system and weight be initially prepared in the states $\rho\sub{\s}$ and $\rho\subw{\ww}$, respectively. Let the two systems evolve by a unitary operator $U$ that conserves the total Hamiltonian $H\sub{\ww} + H\sub{\s}$, and induces the complementary CPTP maps  $\Lambda$ on $\s$ and $\Lambda^*$ on $\ww$. Then the work transferred into the weight, $W$, as defined by \defref{definition work}, will never exceed the maximum work that can be directly extracted from the system by the process $\rho\sub{\s} \mapsto \Lambda(\rho\sub{\s})$, in the internal description, and using a single thermal reservoir at temperature $T$. If $\rho\sub{\s}$ is thermal, then $W$ cannot be positive. 
\end{lem}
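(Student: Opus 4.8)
The plan is to rewrite both sides of the claimed inequality so that the energy contributions cancel, leaving a purely entropic statement that follows from subadditivity of the von Neumann entropy. First I would exploit that $U$ conserves $H\sub{\ww}+H\sub{\s}$, so the energy gained by the weight equals the energy lost by the system. Writing $\Delta E\sub{\s} := \tr[H\sub{\s}\Lambda(\rho\sub{\s})] - \tr[H\sub{\s}\rho\sub{\s}]$, energy conservation gives $\tr[H\sub{\ww}\Lambda^*(\rho\subw{\ww})] - \tr[H\sub{\ww}\rho\subw{\ww}] = -\Delta E\sub{\s}$. Substituting this into \defref{definition work} yields $W = -\Delta E\sub{\s} - K_B T\,(S(\Lambda^*[\rho\subw{\ww}]) - S(\rho\subw{\ww}))$, whereas the free-energy bound \eqref{free energy work inequality}, taken in its quasi-static (equality) limit, identifies the maximal work extractable from the system along $\rho\sub{\s}\mapsto\Lambda(\rho\sub{\s})$ in the internal description as $F(\rho\sub{\s}) - F(\Lambda[\rho\sub{\s}]) = -\Delta E\sub{\s} + K_B T\,(S(\Lambda[\rho\sub{\s}]) - S(\rho\sub{\s}))$.

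After cancelling the common $-\Delta E\sub{\s}$ and dividing by $K_B T>0$, the desired inequality $W\leqslant F(\rho\sub{\s}) - F(\Lambda[\rho\sub{\s}])$ becomes equivalent to
\begin{align}
S(\rho\sub{\s}) + S(\rho\subw{\ww}) \leqslant S(\Lambda[\rho\sub{\s}]) + S(\Lambda^*[\rho\subw{\ww}]).
\end{align}
To establish this I would introduce the joint final state $\Omega := U(\rho\subw{\ww}\otimes\rho\sub{\s})U^\dagger$ and combine three elementary facts: entropy is unitarily invariant and additive over product states, so $S(\Omega) = S(\rho\subw{\ww}) + S(\rho\sub{\s})$; the marginals of $\Omega$ are exactly $\Lambda^*[\rho\subw{\ww}]$ on $\ww$ and $\Lambda[\rho\sub{\s}]$ on $\s$, by the definition of the complementary maps in \eqref{feedback CPTP maps}; and subadditivity gives $S(\Omega) \leqslant S(\Lambda^*[\rho\subw{\ww}]) + S(\Lambda[\rho\sub{\s}])$. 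Chaining these three relations delivers the displayed inequality, and hence the first assertion of the Lemma.

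For the thermal case I would invoke the variational property of the Gibbs state: among all states on $\h\sub{\s}$, the thermal state $\tau\sub{\s}^\beta = e^{-\beta H\sub{\s}}/\tr[e^{-\beta H\sub{\s}}]$ uniquely minimises the non-equilibrium free energy $F$ of \eqref{free energy} at temperature $T$. Thus if $\rho\sub{\s} = \tau\sub{\s}^\beta$ then $F(\rho\sub{\s}) \leqslant F(\Lambda[\rho\sub{\s}])$ for any CPTP map $\Lambda$, whence $F(\rho\sub{\s}) - F(\Lambda[\rho\sub{\s}]) \leqslant 0$; together with the bound just proved this forces $W \leqslant 0$.

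I expect the substance of the argument to be the single application of subadditivity, so the work is conceptual bookkeeping rather than a hard estimate. The two points to handle with care are: ensuring that the global input is genuinely the product $\rho\subw{\ww}\otimes\rho\sub{\s}$, which is what makes $S(\Omega)$ additive and turns subadditivity into the one-sided inequality needed; and confirming that the ``maximum work extractable from the system'' in the statement is precisely the free-energy difference furnished by \eqref{free energy work inequality} in its equality (quasi-static) limit, rather than the work actually delivered by the interaction with $\ww$. Once these identifications are pinned down, energy conservation and subadditivity close the proof.
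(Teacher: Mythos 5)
Your proposal is correct and follows essentially the same route as the paper: energy conservation of $U$ converts the weight's energy gain into the system's energy loss, and the single substantive step is the subadditivity bound $S(\rho\sub{\s}) + S(\rho\subw{\ww}) \leqslant S(\Lambda[\rho\sub{\s}]) + S(\Lambda^*[\rho\subw{\ww}])$, exactly as in the paper's chain of (in)equalities. For the thermal case the paper writes $F(\rho\sub{\s}^\beta) - F(\Lambda[\rho\sub{\s}^\beta]) = -K_B T\, S(\Lambda[\rho\sub{\s}^\beta]\,\|\, \rho\sub{\s}^\beta) \leqslant 0$ explicitly, while you invoke the Gibbs variational principle; these are the same fact, so the difference is purely one of phrasing.
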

\begin{proof}
By \defref{definition work}, energy conservation of $U$, and the subadditivity of the von-Neumann entropy, we have 
\begin{align}
W &:= F(\Lambda^*[\rho\subw{\ww}]) - F(\rho\subw{\ww}),\nonumber \\
& = \tr[H\sub{\ww} (\Lambda^*[\rho\subw{\ww}] - \rho\subw{\ww})]  \nonumber \\ 
 &\, \, \, \, \, + K_B T \, (S(\varrho\subw{\ww}) - S(\Lambda^*[\rho\subw{\ww}])), \nonumber \\
& = \tr[H\sub{\s} ( \rho\sub{\s} - \Lambda[\rho\sub{\s}] )] \nonumber \\ 
 &\, \, \, \, \, + K_B T \, (S(\varrho\subw{\ww}) - S(\Lambda^*[\rho\subw{\ww}])), \nonumber \\
 &\leqslant \tr[H\sub{\s} ( \rho\sub{\s} - \Lambda[\rho\sub{\s}])] \nonumber \\ 
 &\, \, \, \, \, + K_B T \, (S(\Lambda[\rho\sub{\s}]) - S(\rho\sub{\s})), \nonumber \\
 &= F(\rho\sub{\s}) - F(\Lambda[\rho\sub{\s}]).
\end{align}
By \eq{free energy work inequality}, we see that $W$ is never greater than the upper bound of $W_\mathrm{ext}(\rho\sub{\s} \mapsto \Lambda[\rho\sub{\s}])$. Moreover, if the system is initially in the thermal state $\rho\sub{\s} = \rho\sub{\s}^\beta := e^{-\beta H\sub{\s}}/\tr[e^{-\beta H\sub{\s}}]$, we have
\begin{align}
W &\leqslant F(\rho\sub{\s}^\beta) - F(\Lambda[\rho\sub{\s}^\beta]), \nonumber \\
&=  -K_B T \,  S(\Lambda[\rho\sub{\s}^\beta]\| \rho\sub{\s}^\beta),
\end{align}
where $S(\rho\| \sigma):= \tr[\rho(\ln(\rho) - \ln(\sigma))]$ is the entropy of $\rho$ relative to $\sigma$, which is a non-negative number and vanishes if and only if $\rho = \sigma$. Therefore, $W \leqslant 0$. 
\end{proof}

\section{An example with qubits}\label{Qubit examples}
As an illustrative example, consider the simple case where $\s$ and $\dd$ are both qubits, with the Hamiltonians 
\begin{align}
H\sub{\s} &:= \frac{\omega}{2}(\prs{\varphi_+} - \prs{\varphi_-}), \nonumber \\
H\sub{\dd} &:= \lambda_+\prdd{\psi_+} + \lambda_-\prdd{\psi_-}. 
\end{align}
Furthermore, let the initial state of the system be 
\begin{equation}\label{eq:qubit-initial-state}
\rho\sub{\s} = q\prs{\varphi_+} + (1-q)\prs{\varphi_-},
\end{equation}
while that of $\dd$ is $\ket{\psi}$. We wish to measure a two-valued observable $M\sub{\s}$, with outcomes $\pm$, with the measurement model $\mathscr{M} = (\h\sub{\dd}, \ket{\psi}, U_M, Z\sub{\dd})$. In order to satisfy \reqref{requirement energy} for the measuring process, as shown by \lemref{lemma measurement energy conserving} and \lemref{Repeatability lemma}, $M\sub{\s}$ and $Z\sub{\dd}$ must commute with $H\sub{\s}$ and $H\sub{\dd}$, respectively. Therefore, we choose 
\begin{align}
M\sub{\s}:=\sum_{x\in \pm} x\prs{\varphi_x},
\end{align}
and 
\begin{align}
Z\sub{\dd}= \sum_{x\in \pm} x \prdd{\psi_x}.
\end{align}
Given our choice of $M\sub{\s}$ and $Z\sub{\dd}$,  the premeasurement unitary operator is chosen as
\begin{equation} 
U_{M}: \ket{\varphi_\pm} \otimes \ket{\psi} \mapsto \ket{\tilde\varphi_\pm}\otimes \ket{\psi_\pm}. 
\end{equation}
Finally, in order for the engine to satisfy  \reqref{requirement energy} and \reqref{requirement feedback} for the feedback process, we choose the  global feedback unitary operator
\begin{align}
V = \sum_{x\in \pm} U_x \otimes \prdd{\psi_\pm}. 
\end{align}

Following \cite{Aberg-Catalytic-Coherence}, we will use a harmonic oscillator of frequency $\omega$ as the weight, with the Hamiltonian
\begin{align}
H\sub{\ww} := \omega \sum_{n \in \nat} n \,\prw{n}.
\end{align}
Consequently, the conditional work extraction unitaries on $\ww+ \s$, namely, $U_\pm$, can be constructed as 
\begin{align}\label{conditional unitary}
U_\pm &:= \sum_{n=3}^\infty\sum_{a,b \in \{\pm\}} |n-f\>\<n-g| \otimes |\varphi_a\>\<\varphi_b| \<\varphi_a|G_\pm |\varphi_b\> \nonumber \\
& \, \, \, \, + \prw{1}\otimes\one\sub{\s},
\end{align}
where $f:= \max\{1,a1-b1\}$ and $g:= \max\{1,b1-a1\}$, with $a,b \in \{\pm\}$. Here, $G_\pm:= |\varphi_-\>\<\tilde \varphi_\pm| + |\varphi_+\>\<\tilde \varphi_\pm^\perp|$ is a unitary operator on $\s$, such that $\<\tilde \varphi_\pm| \tilde \varphi_\pm^\perp\>=0$. Therefore, when the system undergoes a transition $\ket{\varphi_+} \mapsto \ket{\varphi_-}$, the weight eigenstates are shifted up by one quantum, and vice versa.     

It can be easily verified that $[U_\pm, H\sub{\ww}+ H\sub{\s}]_-=\zero$, even when $\ket{\tilde \varphi_\pm}$ are not eigenstates of the system Hamiltonian. If the weight is initialised in a pure state $\rho\subw{\ww} := \prw{\Psi}$, where $\ket{\Psi}$ is an equal superposition of $N$ Hamiltonian eigenstates,
\begin{align}
\ket{\Psi}:= \frac{1}{\sqrt{N}} \sum_{n=2}^{N+1} \ket{n},
\end{align}
 then it can function as a work storage device. This is a result of the energy-translational invariance of $\ket{\Psi}$; adding or removing one quantum is identical to a coordinate transformation $n\mapsto n+1$ and $n\mapsto n-1$, respectively. Moreover, if $\ket{\tilde \varphi_\pm}$ are the eigenvectors of $H\sub{\s}$, then irrespective of $N$ the resulting dynamics on both $\s$ and $\ww$ will be unitary. As such, \featref{feature same entropy} will be satisfied in this case. This is not so when $\ket{\tilde \varphi_\pm}$ are superpositions of $H\sub{\s}$ eigenvectors. For example, in the case of $\ket{\tilde \varphi_\pm} = \frac{1}{\sqrt{2}}(\ket{\varphi_+} \pm \ket{\varphi_-})$, we have 
\begin{align}
\<\varphi_-|\Lambda_\pm(\tilde \varphi_\pm)|\varphi_-\> =\frac{2N-1}{2N},
\end{align}
with 
\begin{align}
S(\Lambda_\pm^*(\rho\subw{\ww})) < \frac{1}{2N}\ln\left(2N\right) + \frac{2N-1}{2N}\ln\left(\frac{2N}{2N-1}\right).
\end{align}
In the limit as $N$ tends to infinity, the increase in the weight's entropy can be made arbitrarily small, thus approximately satisfying \featref{feature same entropy}.

We now look at two possible implementations of measurement-assisted work extraction, labeled I and II. In I, the observable $M\sub{\s}$ is measured repeatably, thus satisfying \featref{feature repeatable measurement}, while in II this is not the case.  As the weight is initially pure, its entropy can never decrease. Therefore, \featref{feature positive work extraction} is satisfied in II, but not in I.

\subsection{Example I: repeatable measurement}

Let $\ket{\tilde\varphi_\pm} = \ket{\varphi_\pm}$, thus satisfying \featref{feature repeatable measurement}. Consequently, the state of $\s+\dd$ after premeasurement is
\begin{align}\label{eq:observable-commutes-qubit-measured-state}
U_M(\rho\sub{\s} \otimes \prdd{\psi})U_M^\dagger &= q P\sub{\ww+\s}[ \varphi_+\otimes \psi_+] \nonumber \\ &\, \, \, \, \, \,+ (1-q)P\sub{\ww+\s}[\varphi_-\otimes \psi_-].
\end{align}
Transforming this state with the weight by the global unitary $V$  prepares
\begin{align}\label{eq:observable-commutes-qubit-work-state}
&\tr\sub{\ww}[V U_M(\prw{\Psi}\otimes\rho\sub{\s} \otimes \prdd{\psi})U_M^\dagger V^\dagger ] \nonumber \\ &\, \, \, \,= q P\sub{\ww+\s}[\varphi_- \otimes \psi_+] +  (1-q)P\sub{\ww+\s}[\varphi_-\otimes \psi_-]  .
\end{align}
Comparing \eq{eq:observable-commutes-qubit-measured-state} with \eq{eq:observable-commutes-qubit-work-state}, we see that, as a result of feedback, the system undergoes the transition $\ket{\varphi_+}\mapsto \ket{\varphi_-}$ when the demon is in the state $\ket{\psi_+}$, resulting in a work extraction of $\omega$. When the demon is in the state $\ket{\psi_-}$, on the other hand, the system was already in the groundstate $\ket{\varphi_-}$ and is left the same, resulting in zero work extraction. Therefore, \featref{feature positive work extraction} is not satisfied.

\subsection{Example II: non-repeatable measurement}

Let $\ket{\tilde\varphi_\pm} = \frac{1}{\sqrt{2}}(\ket{\varphi_+} \pm \ket{\varphi_-})$. Hence, \featref{feature repeatable measurement} is not satisfied. Consequently, the state of $\s+\dd$ after premeasurement  is
\begin{align}\label{eq:observable-notcommutes-qubit-measured-state}
&U_{M}(\rho\sub{\s} \otimes \prdd{\psi})U_{M}^\dagger = q P\sub{\ww+\s}\left[\frac{(\varphi_+ + \varphi_-)}{\sqrt{2}}\otimes \psi_+ \right] \nonumber \\ & \, \, \, \, + (1-q)P\sub{\ww+\s}\left[\frac{(\varphi_+ - \varphi_-)}{\sqrt{2}}\otimes \psi_- \right].
\end{align}
Transforming this state with the weight by the global unitary $V$ prepares, in the ideal limit of $N \to \infty$, 
\begin{align}\label{eq:observable-notcommutes-qubit-work-state}
&\tr\sub{\ww}[ V U_{M}(\prw{\Psi}\otimes\rho\sub{\s} \otimes \prdd{\psi})U_{M}^\dagger V^\dagger] \nonumber \\
& \, \, \, = q P\sub{\ww+\s}[\varphi_- \otimes \psi_+] +  (1-q)P\sub{\ww+\s}[\varphi_-\otimes \psi_-] .
\end{align}
Comparing \eq{eq:observable-notcommutes-qubit-measured-state} with \eq{eq:observable-notcommutes-qubit-work-state} we see that, as a result of feedback, the system undergoes the transition $\frac{1}{\sqrt{2}}(\ket{\varphi_+}\pm \ket{\varphi_-})\mapsto \ket{\varphi_-}$ when the demon is in the states $\ket{\psi_\pm}$, resulting in a work extraction of $\omega/2$ for both measurement outcomes.  Therefore, \featref{feature positive work extraction} is  satisfied.

\section{Satisfying all three features with either a thermal reservoir, or degenerate observables}\label{circumventing theorem 1}

There are at least two ways in which \thmref{impossibility theorem} can be circumvented: (i)   letting the reservoir $\rr$ be involved during the feedback stage; and (ii) measure $\s$ with a degenerate observable.

\subsection{Szilard engine with heat from a thermal reservoir}
As a simple example, let $\s$ be a $d$-dimensional system, and let $\rr$ be a system initially prepared in the thermal state 
\begin{align}
\tau\sub{\rr}^\beta:= \frac{e^{-\beta H\sub{\rr}}}{\tr[e^{-\beta H\sub{\rr}}]},
\end{align} 
where $\beta = (K_B T)^{-1}$ is the inverse temperature. By  \lemref{Repeatability lemma},  the non-degenerate observable $M\sub{\s} = \sum_{x\in \xx} \pr{\varphi_x}$ can only be measured repeatably if it commutes with the system Hamiltonian $H\sub{\s}$. As such, in order to satisfy \featref{feature repeatable measurement} the post-measurement states $\{\ket{\varphi_x\}}_{x\in \xx}$ must be eigenstates of  $H\sub{\s}$. Including the reservoir in the feedback stage means that the $U_x$ in the feedback unitary operator defined in \eq{feedback unitary} are unitary operators on the compound $\ww+\s+\rr$ such that $[U_x,H\sub{\ww} + H\sub{\s} + H\sub{\rr}]_-=\zero$. The  CPTP maps defined in \eq{feedback CPTP maps} will therefore be modified as
\begin{align}\label{feedback CPTP maps reservoir}
\Lambda_x&: \prs{ \varphi_x}  \mapsto \tr\sub{\ww+\rr}[U_x(\rho\subw{\ww}\otimes\prs{ \varphi_x}\otimes \tau\sub{\rr}^\beta)U_x^\dagger], \nonumber \\
\Lambda_x '&: \tau\sub{\rr}^\beta  \mapsto \tr\sub{\ww+\s}[U_x(\rho\subw{\ww}\otimes\prs{ \varphi_x}\otimes \tau\sub{\rr}^\beta)U_x^\dagger], \nonumber \\
\Lambda_x^* &: \rho\subw{\ww} \mapsto \tr\sub{\s+\rr}[U_x(\rho\subw{\ww}\otimes\prs{ \varphi_x}\otimes \tau\sub{\rr}^\beta)U_x^\dagger].
\end{align}
 The subadditivity of the von-Neumann entropy and its invariance under unitary evolution implies that
\begin{align}\label{reservoir subadditivity entropy}
&S(\Lambda_x'[\tau\sub{\rr}^\beta])  - S(\tau\sub{\rr}^\beta) \nonumber \\ 
&\, \, \, \, \, \geqslant S(\rho\subw{\ww}) - S(\Lambda_x^*[\rho\subw{\ww}]) - S(\Lambda_x[\varphi_x]).
\end{align}

Recalling that when \featref{feature same entropy} is satisfied, $S(\rho\subw{\ww}) - S(\Lambda_x^*[\rho\subw{\ww}]) = 0$, then by \defref{definition work} and \eq{reservoir subadditivity entropy}, the  work that can be extracted for each measurement outcome, when both  \featref{feature repeatable measurement} and \featref{feature same entropy} are satisfied, is bounded by 
\begin{align}\label{reservoir-assisted-work-bound}
W_x &= \tr[H\sub{\rr}(\tau\sub{\rr}^\beta - \Lambda_x'[\tau\sub{\rr}^\beta])] + \tr[H\sub{\s}(\prs{\varphi_x} - \Lambda_x[\varphi_x])],  \nonumber \\
& =  \beta^{-1} \left( S(\tau\sub{\rr}^\beta) - S(\Lambda_x'[\tau\sub{\rr}^\beta])- S\left( \Lambda_x'[\tau\sub{\rr}^\beta]\| \tau\sub{\rr}^\beta\right)\right) \nonumber \\ 
&\, \, \, \, + \tr[H\sub{\s}(\prs{\varphi_x} - \Lambda_x[\varphi_x])] ,\nonumber \\
&\leqslant \beta^{-1} \left( S(\Lambda_x[\varphi_x])  - S( \Lambda_x'[\tau\sub{\rr}^\beta]\| \tau\sub{\rr}^\beta)\right)\nonumber \\ 
&\, \, \, \, + \tr[H\sub{\s}(\prs{\varphi_x} - \Lambda_x[\varphi_x])] ,\nonumber \\
&\leqslant \beta^{-1} S(\Lambda_x[\varphi_x])\nonumber  + \tr[H\sub{\s}(\prs{\varphi_x} - \Lambda_x[\varphi_x])] .
\end{align}
The final inequality can be saturated when the relative entropy term, $S( \Lambda_x'[\tau\sub{\rr}^\beta]\| \tau\sub{\rr}^\beta)$, which is a non-negative number, is made vanishingly small. As shown in \cite{Reeb-Wolf-Landauer}, this  can be done if the dimension of $\h\sub{\rr}$ is chosen to be sufficiently large, and its Hamiltonian spectrum is carefully chosen. As $S(\Lambda_x[\varphi_x])$ can be positive even when the weight's entropy is not allowed to change, we can always have positive work extraction. This is true even if the post-measurement state $\ket{\varphi_x}$ is the groundstate of $H\sub{\s}$. Moreover, if $H\sub{\s}$ is fully degenerate, and  $\Lambda_x[\varphi_x]= \one\sub{\s}/d$, then the maximum value of $W_x$ will be $K_B T \, \ln(d)$ for all $x\in \xx$. If $d=2$, this coincides with the work extracted from the classical Szilard engine when the volumes of the left and right side of the partition are identical.  

\subsection{Degenerate observables}\label{degenerate observable section}

Recall that \thmref{impossibility theorem} states that, when  \featref{feature same entropy} is satisfied, then the extracted work will not be positive for the outcome where the post-measurement state coincides with the groundstate of the system Hamiltonian. Here we show that, if the observable $M$ is both degenerate and is measured ``inefficiently'', then the post-measurement states can always be chosen so as to have more energy than the groundstate of $H\sub{\s}$, thus allowing for the circumvention of \thmref{impossibility theorem}.

For a system $\s$ with Hilbert space $\h\sub{\s} \simeq \co^d$ such that $d>2$, let $M\sub{\s}$ be a degenerate observable
\begin{align}
M\sub{\s} = \sum_{x\in \xx} x P^x\sub{\s},
\end{align}
such that $|\xx| <d$, and $\{P^x\sub{\s}\}_{x\in \xx}$ is a complete and orthogonal set of projection operators on $\h\sub{\s}$. We label the orthonormal eigenstates of $M\sub{\s}$ as $\ket{\varphi_x^\alpha}$, where $\alpha$ is a degeneracy label, such that $M\sub{\s} \ket{\varphi^\alpha_x} = x \ket{\varphi^\alpha_x}$  for all $\alpha$ and $x$.  The measurement model for this observable, $\mathscr{M} = (\h\sub{\dd}, \ket{\psi}, U_M, Z\sub{\dd})$,  will be repeatable if for all $x\in \xx$, the post-measurement states lie in the support of $P^x\sub{\s}$. Moreover, by the WAY theorem, if $\mathscr{M}$ is to be repeatable, given that $U_M$ conserves the total Hamiltonian, then $P\sub{\s}^x$ must commute with $H\sub{\s}$ for all $x\in \xx$.  Consider the projector $P\sub{\s}^y$ whose support contains the groundstate(s) of $H\sub{\s}$. It follows that for a repeatable measurement, $y$ is the only outcome whose post-measurement state will have support on the groundstate(s) of $H\sub{\s}$. Therefore, in order to circumvent \thmref{impossibility theorem} we need to show that, for all $\rho\sub{\s}$, the post-measurement state given outcome $y$ has more energy than the minimum eigenvalue of $H\sub{\s}$.

We will now look at two repeatable, and energy conserving  measurement models for the degenerate observable $M\sub{\s}$. The first model is a generalisation of a L\"uders measurement \cite{Mittelstaedt-measurement,Heinosaari}. Here, for some state $\rho\sub{\s}$, the post-measurement state of outcome $y$ is the groundstate of $H\sub{\s}$. Consequently, this measurement model will not circumvent  \thmref{impossibility theorem}. In the second model, we may always ensure that the post-measurement state for outcome $y$ will have more energy than the groundstate, thus circumventing \thmref{impossibility theorem}. We show that this is equivalent to coarse-graining the measurement outcomes of a non-degenerate observable, in such a way so as to allow for a repeatable measurement that is also ``inefficient''.

\subsubsection{Strong value-correlation measurements}

These measurements, just as the standard measurements for non-degenerate observables, have the property that, for any pure state $\ket{\Psi} \in \h\sub{\s}$, the post-measurement state for outcome $x\in \xx$ will also be pure. Here,  the premeasurement unitary operator is
\begin{align}
U_M: \ket{\varphi^\alpha_x}\otimes \ket{\psi} \mapsto  \ket{\tilde \varphi^\alpha_x}\otimes \ket{\psi_x},
\end{align}
where $\{ \ket{\tilde \varphi^\alpha_x}\}_\alpha$ is an orthonormal basis that spans $P\sub{\s}^x(\h\sub{\s})$. The instrument implemented by this measurement model will be
\begin{align}
\ii_x^M: \rho\sub{\s} \mapsto V_x P\sub{\s}^x \rho\sub{\s} P\sub{\s}^x V_x^\dagger,
\end{align}
where $V_x$ is a unitary operator acting on the support of $P\sub{\s}^x$. This instrument has only one Kraus operator, $K_x = V_x P\sub{\s}^x$, and it is said to result in an ``efficient'' measurement. If $V_x = \one$, whereby $\ket{\tilde \varphi^\alpha_x} = \ket{\varphi^\alpha_x}$, we have a L\"uders measurement. 

If the system is initially in the pure state 
\begin{align}
\ket{\Psi} = \sum_{x,\alpha} c_x^\alpha \ket{\varphi_x^\alpha},
\end{align} 
the post-measurement state for outcome $y$ will be
\begin{align}
&\frac{\ii_y^M(\prs{\Psi})}{\tr[\ii_y^M(\prs{\Psi})]} = \prs{ \Psi_y},\nonumber \\
&\ket{\Psi_y} = \frac{1}{N}\sum_{\alpha} c_y^\alpha \ket{\tilde \varphi^\alpha_y}, \, \, \, \, \, \, \, N^2 = \sum_\alpha |c^\alpha_y|^2.
\end{align}
Therefore, for some state $\ket{\Psi}$, the post-measurement state $\ket{\Psi_y}$ will be equal to the groundstate of the Hamiltonian. As such, \thmref{impossibility theorem} will not be circumvented.

\subsubsection{Coarse-grained standard measurements}
Let us denote the degenerate eigenstates of $Z\sub{\dd}$ as the orthonormal set of vectors  $\{\ket{\psi_x^\alpha}\}$  such that  $Z\sub{\dd} \ket{\psi^\alpha_x} = x \ket{\psi^\alpha_x}$ for all $x$ and $\alpha$.  The premeasurement unitary operator can then be defined as
\begin{align}\label{degenerate premeasurement}
U_M : \ket{\varphi^\alpha_x}\otimes \ket{\psi} \mapsto  \ket{\tilde \varphi^\alpha_x}\otimes \ket{\psi_x^\alpha}.
\end{align}
Comparing with \eq{standard premeasurement unitary}, we may see this as a coarse-grained measurement of a standard, non-degenerate observable. Now, the vectors in $\{\ket{\tilde \varphi^\alpha_x}\}_\alpha$ no longer have to be orthonormal. But, they must still be eigenstates of $M\sub{\s}$ with eigenvalue $x$ for the measurement to be repeatable. The instrument implemented by this measurement model will be
\begin{align}
\ii_x^M: \rho\sub{\s} \mapsto \sum_\alpha V_{x,\alpha} \prs{\varphi_x^\alpha} \rho\sub{\s} \prs{\varphi_x^\alpha} V_{x,\alpha}^\dagger,
\end{align}
where $V_{x,\alpha}$ are unitary operators acting on the support of $P\sub{\s}^x$. In contrast to the generalised L\"uders measurement discussed previously, this instrument has more than one Kraus operator, and leads to an ``inefficient'' measurement.  

If the system is initially in the pure state 
\begin{align}
\ket{\Psi} = \sum_{x,\alpha} c_x^\alpha \ket{\varphi_x^\alpha},
\end{align} 
the post-measurement state for outcome $y$ will be
\begin{align}
&\frac{\ii_y^M(\prs{\Psi})}{\tr[\ii_y^M(\prs{\Psi})]}  = \frac{1}{\sum_\alpha |c^\alpha_y|^2}\sum_\alpha |c^\alpha_y|^2 \prs{\tilde \varphi^\alpha_y}.
\end{align}
Due to the orthogonality of the vectors $\ket{\psi_x^\alpha}$ in \eq{degenerate premeasurement}, for each $x$ and $\alpha$, the vectors   $\ket{\tilde \varphi^\alpha_y}$ can  be any superpositions of Hamiltonian eigenstates that live in the support of $P^y\sub{\s}$. So we may simply choose these as the highest energy state within that subspace. Consequently, \thmref{impossibility theorem} will be circumvented.

\section{Net work extraction per cycle of a quantum Szilard engine without heat from a thermal reservoir}\label{Energetic contributions in one cycle of the quantum Szilard engine}

Each cycle of work extraction involves the following steps: (i) $\s$ is given in state $\rho\sub{\s}$; (ii)  $\s$ and $\dd$ undergo a joint unitary evolution by  $U_M$; (iii) work is extracted from $\s$ by a feedback unitary operator $V$ on $\ww+\s+\dd$; (iv) $\dd$ is reset to its initial state $\ket{\psi}$ by coupling to a thermal reservoir. \fig{work extraction model} shows this schematically.  

The initial state of the compound $\ww+\s+\dd+\rr$ is 
\begin{align}
\rho = \rho\sub{\ww}\otimes \rho\sub{\s} \otimes \prdd{\psi} \otimes \tau\sub{\rr}^\beta,
\end{align}
where $\tau\sub{\rr}^\beta := e^{-\beta H\sub{\rr}}/\tr[e^{-\beta H\sub{\rr}}]$ is the Gibbs state of the reservoir at inverse temperature $\beta = (K_B T)^{-1}$. After premeasurement, objectification, and feedback the state will be 
\begin{align}
\rho':= V (\rho\sub{\ww}\otimes\rho\sub{\s+\dd}^{M,O}) V^\dagger \otimes \tau\sub{\rr}^\beta,
\end{align}
where $\rho\sub{\s+\dd}^{M,O}$ is defined in \eq{Gemenge}.  The marginal states of $\rho'$ satisfy the relations 
\begin{align}
\rho\sub{\s}' &:= \sum_{x\in \xx} p^M_{\rho_{\!_\s}}(x)\Lambda_x[\tilde \varphi_x] \equiv \tr\sub{\ww+\dd}[V(\rho\subw{\ww}\otimes\rho^{M,O}\sub{\s+\dd}) V^\dagger] , \nonumber \\
\rho\subw{\ww}' &:= \sum_{x\in \xx} p^M_{\rho_{\!_\s}}(x)\Lambda_x^*[\rho\subw{\ww}] \equiv \tr\sub{\s+\dd}[V(\rho\subw{\ww}\otimes\rho^{M,O}\sub{\s+\dd}) V^\dagger], \nonumber \\
\rho\sub{\dd}' &:= \tr\sub{\ww+\s}[V(\rho\subw{\ww}\otimes\rho^{M,O}\sub{\s+\dd}) V^\dagger],
\end{align}
where $p^M_{\rho_{\!_\s}}(x)$ is the Born rule probability defined in \eq{Born rule}, while $\Lambda_x$ and $\Lambda_x^*$ are the CPTP maps induced by feedback, as defined in \eq{feedback CPTP maps}.

Using \defref{definition work}, we may view the work transferred into the weight, when the different measurement outcomes are not distinguished from one another, to be  
\begin{align} \label{coarse-grained work extraction} 
W_\xx &:= \tr[H\sub{\ww} (\rho\sub{\ww}' - \rho\sub{\ww} )] + K_B T \, (S(\rho\sub{\ww}) - S(\rho\sub{\ww}')), \nonumber \\
& = \tr[H\sub{\s} (\rho\sub{\s} - \rho\sub{\s}' )] + \tr[H\sub{\dd} (\prdd{\psi} - \rho\sub{\dd}' )]\nonumber \\ & \, \, \, + K_B T \, (S(\rho\sub{\ww}) - S(\rho\sub{\ww}')).
\end{align}
Here we have used the fact that feedback and measurement are energy conserving on the total system. We call $W_\xx$  the ``coarse-grained'' work, which is  different to the average work, obtained  by averaging $W_x$ over all measurement outcomes $x\in \xx$, which is 
\begin{align}\label{average  work extraction}
\<W_x\> &:= \sum_{x\in \xx} p^M_{\rho_{\!_\s}}(x) W_x, \nonumber \\ &= \tr[H\sub{\ww} (\rho\subw{\ww}' -\rho\subw{\ww})] \nonumber \\ & \, \, \, \, \, \, \, + K_B T \,\left(S(\rho\subw{\ww}) - \sum_{x\in \xx} p^M_{\rho_{\!_\s}}(x) S(\Lambda_x^*[\rho\subw{\ww}]) \right),\nonumber \\
& \geqslant W_\xx.
\end{align}
The inequality here is due to the concavity of the von-Neumann entropy.

Before the cycle can begin anew, the demon must be reset to the original pure state $\ket{\psi}$. This is achieved within the Landauer framework, by coupling $\dd$ with $\rr$ by the ``erasure'' unitary operator $U_R: \h\sub{\dd}\otimes \h\sub{\rr} \to \h\sub{\dd}\otimes \h\sub{\rr}$. If the reservoir is infinitely large, then $U_R$ can be chosen so that
\begin{align}
\tr\sub{\rr}[U_R (\rho\sub{\dd}' \otimes \tau\sub{\rr}^\beta)U_R^\dagger] = \prdd{\psi}.
\end{align}
To be sure, $U_R$ is generally  not energy conserving, and thus needs a hidden work source. Notwithstanding, this is not a problem,  because erasure always consumes work. Therefore, this hidden work source does not contribute to work extraction within a cycle. Defining the reduced state of the reservoir after its interaction with $\dd$ as $\tau\sub{\rr}'$, the consequent  increase in energy of the reservoir, defined as heat, obeys Landauer's inequality
\begin{equation}\label{Landauer inequality}
Q:= \tr[H\sub{\rr} (\tau\sub{\rr}' - \tau\sub{\rr}^\beta)] \geqslant \beta^{-1}  S(\rho\sub{\dd}').
\end{equation}
As shown in \cite{Reeb-Wolf-Landauer}, this bound can be achieved if the reservoir is infinitely large, and its Hamiltonian has a specific spectrum. Furthermore, we note that premeasurement, objectification, and feedback results in a unital CPTP map, which does not decrease the  von-Neumann entropy \cite{Uhlmann-Stochasticity, Nakahara-Decoherence}. This, together with the subadditivity of the von-Neumann entropy \cite{Petz-QI}, implies that
\begin{align}\label{feedback entropy inequality}
 S(\rho\subw{\ww}) + S(\rho\sub{\s}) &= S(\rho\subw{\ww}\otimes\rho\sub{\s}\otimes \prdd{\psi}) ,\nonumber \\ 
& \leqslant  S(V(\rho\subw{\ww}\otimes\rho\sub{\s+\dd}^{M,O}) V^\dagger), \nonumber \\
&\leqslant S(\rho'\subw{\ww})+ S(\rho\sub{\s}') + S(\rho\sub{\dd}').
\end{align}
Consequently, by combining \eq{Landauer inequality} and \eq{feedback entropy inequality}, and also taking into account the energy change of the demon due to erasure,  the work cost of erasure is shown to obey the inequality
\begin{align}\label{demon erasure energy cost}
W_R &:= \tr[H\sub{\dd}(\prdd{\psi} - \rho\sub{\dd}')] + Q, \nonumber \\ 
&\geqslant \tr[H\sub{\dd}(\prdd{\psi} - \rho\sub{\dd}')] \nonumber \\ 
& \, \, \, + K_B T \,  (S(\rho\subw{\ww}) + S(\rho\sub{\s}) - S(\rho'\subw{\ww}) - S(\rho\sub{\s}')).
\end{align}
Defining the net coarse-grained work extraction as $W^\mathrm{net}_\xx:= W_\xx - W_R$, by combining \eq{coarse-grained work extraction} and \eq{demon erasure energy cost} we arrive at the inequality
\begin{align}\label{obeying second law}
W^\mathrm{net}_\xx &= F(\rho\sub{\ww}') - F(\rho\sub{\ww}) - W_R\nonumber \\
&=  \tr[H\sub{\s} (\rho\sub{\s} - \rho\sub{\s}' )] \nonumber \\ & \, \, \, \, +  K_B T \, (S(\rho\sub{\ww}) - S(\rho\sub{\ww}')) - Q, \nonumber \\
& \leqslant F(\rho\sub{\s}) - F(\rho\sub{\s}').
\end{align}
The net average work extraction $\<W^\mathrm{net}_x\> := \<W_x\> - W_R$, on the other hand, obeys the modified inequality
\begin{align}\label{dissobeying second law}
\<W^\mathrm{net}_x\> &\leqslant F(\rho\sub{\s}) - F(\rho\sub{\s}') \nonumber \\
& + K_B T \,\left(S(\rho\subw{\ww}') - \sum_{x\in \xx} p^M_{\rho_{\!_\s}}(x) S(\Lambda_x^*[\rho\subw{\ww}]) \right).
\end{align}
Therefore, we see that while the coarse-grained work definition of \eq{coarse-grained work extraction} will satisfy the second law, the average work extraction defined in \eq{average work extraction} will not; if $\rho\sub{\s}$ is initially thermal, the net coarse-grained work extraction given by \eq{obeying second law} will never be positive, whereas the net average work extraction given by \eq{dissobeying second law} could be.

\end{document}